\documentclass{ws-ijgmmp-web}
\usepackage{bm}
\usepackage{color}
\usepackage[colorlinks=true]{hyperref}


\newcommand{\Ignore}[1]{}



\newcommand{\up}{{\uparrow}}
\newcommand{\down}{{\downarrow}}

\renewcommand{\Re}{\operatorname{Re}}
\newcommand\tr{\operatorname{tr}}


\def\e{\mathrm{e}}

\def\d{\mathrm{d}}

\def\H{\mathcal{H}}

\begin{document}

\markboth{P. Facchi, G. Garnero}
{Quantum Thermodynamics and Canonical Typicality}
%
\catchline{}{}{}{}{}
%

\title{QUANTUM THERMODYNAMICS\\ AND CANONICAL TYPICALITY}

\author{PAOLO FACCHI and GIANCARLO GARNERO}
\address{Dipartimento di Fisica and MECENAS,
Universit\`a di Bari, I-70126 Bari, Italy\\ 
INFN, Sezione di Bari,  I-70126 Bari, Italy}


\maketitle

\begin{abstract}
We present here a set of lecture notes on  quantum thermodynamics and canonical typicality. Entanglement can be constructively used in the foundations of statistical mechanics. An alternative version of the postulate of equal a priori probability is derived making use of some techniques of convex geometry.

\end{abstract}

\keywords{quantum thermodynamics; canonical typicality.}

\setcounter{section}{-1}
\section{Introduction} 

\label{sec:intro}

\textit{We have the general question of finding out those features which are common to almost all possible states of the assembly so that one may safely contend that they ``almost always'' obtain.}
\vspace{-0.5cm}
{\flushright E. Schr\"odinger~\cite{schr1}\\
}

\vspace{0.3cm}

We present here the notes of three lectures given by one of us at the workshop ``Quantum Physics: Foundations and Applications", held in February 2016 in Bangalore at the Centre for High Energy Physics, Indian Institute of Science.

The foundations of statistical mechanics are still a subject of debate. One of the most controversial issue is the validity of the postulate of equal a priori probability, which cannot be proved. In these notes we are going to discuss a set of ideas based on typicality
put forward by several authors~\cite{gemmer,goldstein,popescu,popescun},
who have been looking for a different approach.

The proposal will be to abandon the unprovable aforementioned postulate and to replace it with canonical typicality, which can be proved by means of the entanglement between a physical system and its environment.

These notes will be organized as follows. In the first lecture  we are going to introduce the postulate of equal a priori probability and we are going to discuss its inception in the foundations of statistical mechanics. The idea of an ensemble as a collection of identical systems will be introduced and the postulate of equal a priori probability will be translated into the choice of a particular ensemble, the microcanonical one. A quantum version of the aforesaid postulate will be presented in terms of the random phase postulate, and the derivation of the canonical ensemble for a weakly interacting system is presented.
 
In the second lecture entanglement will come on stage and we are going to present the phenomenon of canonical typicality, where thermalization emerges as a consequence of typicality. We learnt this approach from the works of S.~Popescu, A.~J.~Short and A.~Winter~\cite{popescu,popescun},
and we follow them fairly closely. We will be giving quantitative arguments in order to convince the reader about the validity of this result and we will try to explain all the needed mathematical tools.

This set of notes will end with a brief introduction to convex geometry in high dimensions and its counterintuitive results such as the phenomenon of concentration of measure. The lemma by L\'evy, which was used in the previous lecture in the proof of canonical typicality, will be framed and proved in this context.  

\section{Lecture 1: The Postulate of Equal a Priori Probability}
\subsection{Introduction}
\label{ergodic hypothesis}

Statistical mechanics has proved to be a very fruitful theory when dealing with systems with a large number of degrees of freedom.
Nevertheless its foundations are still a matter of strong debates. In this first lecture we would like to provide the reader with a short account on the  postulate of equal a priori probability, and its primary role in the formulation of statistical mechanics.
We will follow the review article~\cite{uffink}. 

We recall that \emph{Statistical Physics is a branch of physics intended to describe the thermal behavior and properties of macroscopic bodies, i.e. formed of a very large number of individual constituents, in relation to the microscopic dynamics of those constituents.}

As a matter of fact statistical physics historically stemmed from the thrust of phenomenological thermodynamics. In fact, by the mid of the 19th century, thermodynamical observables, such as temperature, heat and entropy, where linked to each other by means of empirical principles. 
The behavior at  thermal  equilibrium of macroscopic bodies was not deduced from general assumptions on the microscopic constituents of matter, rather it was framed into a set of effective laws, known as the \textit{laws of thermodynamics}. In particular, the \textit{second law of thermodynamics} restrained the class of natural phenomena which could actually happen.

It was only under the influence of 
Boltzmann, Gibbs and Maxwell that statistical physics emerged as a fundamental theory in the description of the physical world.
Statistical physics differed from phenomenological thermodynamics for two reasons:
first, a mechanical hypothesis on the microscopic behavior of the constituents of matter; and second, the introduction of probability and statistics in order to deduce the laws of thermodynamics.

Let us start from the assumption on the microscopic world. By the end of the 19th century the atomic theory (which had been strongly supported by Boltzmann and largely unaccepted by the rest of the scientific community) was becoming more and more popular. For this reason it started to be rather natural to consider, for example, an ideal gas confined in a box as a system of bouncing balls scattering with each other and with the box's walls.
Classical mechanics, which had scientifically ruled the physics of the 18th century, was once more put on trial, and
its power and effectiveness in the description of physical phenomena was questioned.

Consider a mechanical system of $N$ particles subject to a time-independent potential $V$. Suppose that every particle has mass $m$ and indicate by $(q,p)$ the $2N$ canonical coordinates on phase space $\Gamma$, say $(q,p)=(q_1,...,q_N,p_1,...,p_N)$. The Hamiltonian function  
describing the system is
\begin{equation}
\label{hamilt}
H(q,p)= \frac{1}{2m}\sum_{i=1}^N p_i^2 + V(q_1,...,q_N).
\end{equation}
The state of the system is described by a point in the phase space $\Gamma$, and its time evolution follows the dynamics induced by the vector field associated to the Hamiltonian function. Moreover, since energy $E$ is conserved, the evolution is confined to a level set of the Hamiltonian, say 
\begin{equation}
\Sigma_E=\{(q,p)\in \Gamma\,|\,H(q,p)=E\,\}.
\label{eq:energy shell}
\end{equation}

Here comes on the stage a second ingredient: the introduction of probability in terms of mechanical considerations. 
In~\cite{Boltz} Boltzmann proposed to interpret the probability associated to a particular state as the relative time (compared to a long period of measurement) spent by the system in that state. More concretely, the probability that the phase point lies in an infinitesimal region of $\Sigma_E$ is
\begin{equation}
\rho(q,p)\,\mathrm{d}q \mathrm{d}p= f(q,p)\,\delta\big(H(q,p)-E\big)\,\mathrm{d}q \mathrm{d}p,
\end{equation}
where $\mathrm{d}q\mathrm{d}p$ is the Lebesgue measure on $\Gamma$ and $f$ is a suitable function.
Since Boltzmann's intention was to describe an equilibrium situation, it seems likely that he assumed the former distribution to be time-independent.
This assumption relies on the hypothesis that the total time of a measurement is extremely long, almost infinite, if compared with the intrinsic time scales of evolution.

By Liouville's theorem, $f$ is a constant function on all the admissible trajectories on $\Sigma_E$. 
Moreover, if we assume (\emph{ergodic hypothesis}) that the trajectory of a single point in phase space fills densely $\Sigma_E$, then $f$ is a constant function on the whole $\Sigma_E$, say
\begin{equation}
\rho(q,p)\propto \delta\big(H(q,p)-E\big).
\end{equation}
By means of this argument Boltzmann was, then, able to prove that the thermal equilibrium can be described in terms of Maxwell's distribution of velocities.
Even so, Boltzmann was highly unsatisfied with the ergodic hypothesis and slowly abandoned it. It seems that he only considered it as a useful assumption for his general result: as long as the ergodic hypothesis holds the equilibrium state of an ideal gas is described by the Maxwell-Boltzmann distribution.

\subsection{The postulate of Equal a Priori Probability and the Microcanonical Ensemble}

A different path was laid by Gibbs who, making use of a different idea of probability, introduced the so-called ``ensembles".
In his book~\cite{gibbs} Gibbs did not introduce probability as an ingredient associated to the state of a single system,  but rather as a distribution function on a collection of identical systems, that is, an \emph{ensemble}.

One considers all the possible microscopic configurations of the system, i.e.\ points in phase space  (also called \emph{microstates}), which are compatible with a single macroscopic configuration (\emph{macrostate}). In this  approach, one is not interested in following the temporal evolution of a single microscopic configuration, but rather is concerned about the distribution of all the available microscopic configurations.

More concretely an ensemble is introduced as a probability density function on the phase space $\Gamma$, such that the average number of microstates in a region $A$ of $\Gamma$ is nothing but: $\int_A\rho(q,p)\d q \d p$. Moreover, the expectation value of an observable $f:\Gamma\to\mathbb{R}$ is the average of $f$ over $\Gamma$, that is:
\begin{equation}
\label{ensavg}
\langle f\rangle=\int_{\Gamma}f(q,p)\rho(q,p)\,\d q \d p.
\end{equation}
In this approach time enters into the evolution of $\rho$ which is dictated by  Liouville's theorem, that is:
\begin{equation}
\frac{\partial\rho_t}{\partial t}=\{H, \rho_t\},
\end{equation}
where  $\{f,g\}=\sum_i (\partial_{q_i} f \partial_{p_i} g - \partial_{p_i} f \partial_{q_i} g)$ are the Poisson brackets between the observables $f$ and $g$. The condition of \textit{statistical equilibrium} is provided by stationary ensembles:
\begin{equation}
\frac{\partial\rho_t}{\partial t}=0,
\end{equation}
and among these we find the microcanonical ensemble, which is the only one compatible with the conservation of energy, say:
\begin{equation}
\rho_{\mathrm{mc}}(q,p)= \frac{1}{|\Sigma_E|} \delta\big(H(q,p)-E\big)
\label{eq:microens}
\end{equation}
where $|\Sigma_E|$ is the  measure of $\Sigma_E$.

Classical statistical mechanics is founded on the \emph{Postulate of Equal a Priori Probability}.
It states that  the microstates accessible to an isolated system are all equally probable, because there is no evidence that certain microstates should be more probable than others. In other words when a macroscopic system is at  equilibrium, every state compatible with the energy restriction is equally available. Mathematically this translates into the choice of a constant density function, called the \emph{microcanonical ensemble}.

If the total energy is fixed up to an uncertainty $\delta$, assumed small (on a macroscopic scale), instead of~\eqref{eq:microens} one can consider a density
\begin{equation}
\rho_{\mathrm{mc}}(q,p)= \frac{1}{|\Gamma_{E}|} \chi_{[E, E+ \delta]} \big(H(q,p)\big),
\label{eq:microens2}
\end{equation}
where $|\Gamma_{E}|$ is the volume of the energy shell $\Gamma_{E}=\{(q,p) \, : \, E\leq H(q,p)\leq E+\delta\}$, and $\chi_\Omega$ is  the characteristic function of the set $\Omega$ [$\chi_\Omega(x) =1$ if $x\in\Omega$ and $=0$ otherwise]. 

\subsection{The Ergodic Hypothesis}
In what follows we are going to review with a few more details the ergodic hypothesis used by Boltzmann as a former argument to get the postulate of equal a priori probability. In a nutshell ergodicity states the equality between ensemble and time averages of observables.

Each measurement of an observable $f$ at time $t_0$ takes a certain interval of time to be performed. In this period of time the observable $f$ samples different values so that the effectively measured quantity is the time average
\begin{equation}
\frac{1}{t}\int_{0}^{t} f(T_s(q_0,p_0))\d s,
\end{equation}
where, $(q_0,p_0)$ is the microstate at $t=0$, and $\{T_s\}_{s\in\mathbb{R}}$ is the Hamiltonian flow generated by $H$, as given in equation~(\ref{hamilt}). Thus we are sampling $f$ on the trajectory whose initial point is $(q_0,p_0)$. Moreover, since the time interval during the measurement is very large compared to microscopic time scales, it is legitimate to take the limit $t\to\infty$:
\begin{equation}
f^*(q_0,p_0)=\lim_{t\to\infty}\frac{1}{t}\int_{0}^{t} f(T_s(q_0,p_0))\d s.
\label{eq:timeave}
\end{equation}
The ergodic problem questions when (and if) it may happen that ensemble averages~(\ref{ensavg}) equal time averages~(\ref{eq:timeave}). In general the answer to this question is negative, since $f^*(q_0,p_0)$ depends on the initial condition chosen on the trajectory, while $\langle f\rangle$ does not. Moreover 
$\langle f\rangle$ may depend on time (as long as $\rho$ does), while $f^*(q_0,p_0)$ does not. Notwithstanding these discrepancies, there may be some cases when the equality between time and ensemble averages holds. In the case of statistical equilibrium, say $\rho$ is time independent, these difficulties vanish.

As long as the ergodic hypothesis is true, then
\begin{equation}
f^*(q_0,p_0)= \langle f\rangle_{\mathrm{mc}},
\end{equation}
where $\langle f\rangle_{\mathrm{mc}}$ is the average of $f$ in the microcanonical ensemble. 
Recall that the ergodic hypothesis states that the trajectory of the system in $\Gamma$, the phase space, samples a dense subset of the whole energy shell~\eqref{eq:energy shell}. 
For further readings on the ergodic hypothesis we  recommend~\cite{oliveira,prentis,gallavotti}

\subsection{The Quantum Postulate of Equal a Priori Probability}

The above discussion is completely classical. 
 Since our world is quantum, a quantum formulation of statistical mechanics is in need. 
 Instead of probability distributions on phase space, one should consider density matrices, which encode the whole physical content of the system. Then, one formulates a quantum version of the postulate of equal a priori probability, as we are going to state.

Assume that the system  is described in a Hilbert space $\mathcal{H}$ and its evolution is generated by a Hamiltonian operator $H$. Next, fix a small energy shell around the value $E$, say $[E,E+\delta]$, where $\delta\ll E$ (on a macroscopic scale), but $\delta$ is large enough so that the shell contains many eigenvalues of $H$.

Instead of the phase space region $\Gamma_E$ of~\eqref{eq:microens2}, we have to consider now the spectral subspace of $H$ on  $[E,E+\delta]$, that is the subspace spanned by all eigenvectors with energy eigenvalues belonging to the energy shell, denoted by
\begin{equation}
\mathcal{H}_R=\mathcal{H}_{[E,E+\delta]}.
\label{eq:spectralsub}
\end{equation}
By the assumptions on $\delta$, one gets that $d_R= \dim \mathcal{H}_R \gg 1$.

The postulate of equal a priori probability affirms that all the states compatible with the energy $E$, that is all energy eigenvectors belonging to $\mathcal{H}_R$, are  equiprobable. But in the quantum world this is not enough. In fact, these states must be in an incoherent superposition. This is the content of the \textit{random phase postulate} and it is a purely quantum contribution to the foundations of statistical mechanics. The idea is that the  Fourier coefficient of a vector state in $\mathcal{H}_R$ should have equal probability and completely random phases, due to the unavoidable interactions between the environment and the system \cite{Huang}.

From these two postulates it follows that the state of the universe is described in terms of 
\begin{equation}
P_R = \chi_{[E, E+ \delta]} \big(H\big),
\end{equation}
the spectral projection of $H$ on $\mathcal{H}_R$. The microcanonical ensemble (density matrix) is the equiprobable state on the restriction of $\mathcal{H}_R$:
\begin{equation}
\mathcal{E}_R= \frac{P_{R}}{d_R}, 
\label{eq:microcandef}
\end{equation}
where $d_R = \dim \mathcal{H}_R = \tr P_R$.
Equivalently, the quantum postulate of equal a priori probability implies that the system is in a totally mixed state.

In the next lecture we are going to show how the postulate of (apparent) equal a priori probability can be proved, rather than postulated, from the very structure of quantum mechanics, and in particular from entanglement. But first we want to understand what is the state of a small subsystem of a body in a microcanonical ensemble.

\subsection{The Canonical Ensemble}

\label{sec:canonicalensemble}

We will show that, under some quite general assumptions, if a system is in a microcanonical state~\eqref{eq:microcandef}, every small part of it will be in a canonical state, $\Omega_S \propto \e^{-\beta H_S}$, characterized by a Boltzmann distribution among its eigenstates at a given temperature $\beta^{-1}$. 

Let us split the global system under consideration (call it the \emph{universe}) in two part: a (sub)\emph{system} $S$ and a \emph{bath} $B$.
The Hilbert space describing this composite system is 
\begin{equation}
\mathcal{H}=\mathcal{H}_S\otimes\mathcal{H}_B. 
\end{equation}
The Hamiltonian operator is accordingly split as 
\begin{equation}
\label{eqhamilt}
H= H_S\otimes\mathbb{I}_B+\mathbb{I}_S\otimes H_B + H_{\mathrm{\mathrm{int}}},
\end{equation}
where the Hamiltonians $H_S$ and $H_B$ act separately on the system and the bath, respectively, $\mathbb{I}_S$ and $\mathbb{I}_B$ are respectively the identity operators on $\mathcal{H}_S$ and $\mathcal{H}_B$, and $H_{\mathrm{int}}$ describes the interaction  between the system and the bath. Let us suppose that the interaction is very weak, that is 
\begin{equation}
\|H_{\mathrm{int}}\| \ll \|H_{S}\|,\|H_{B}\|.
\label{eq:smallint}
\end{equation}
Moreover, assume that the dimension of $\mathcal{H}_B$  is much larger than the dimension  of $\mathcal{H}_S$, namely
\begin{equation}
d_B=\dim\mathcal{H}_B \gg d_S=\dim\mathcal{H}_S.
\end{equation}

Next, fix a small energy shell around the macroscopic energy value $E$, say $[E,E+\delta]$, where $\delta\ll E$ on macroscopic scales, large enough to contain many eigenvalues of $H_B$, and let $P_R$ be the projection  
on the spectral subspace of $H$ on the energy shell~\eqref{eq:spectralsub}.
Assume that the universe is in the microcanonical state $\mathcal{E}_R$ given by~\eqref{eq:microcandef}.

The state of the system $S$ can be obtained by partial tracing the state of the universe 
over the bath, namely,
\begin{equation}
\Omega_S= \tr_B\,\mathcal{E}_R.
\end{equation}
By  following~\cite{landau}, we want to show that $\Omega_S$ is a thermal state at a given temperature. 

Let $\{|E_k\rangle\}_{k=1}^{d_B} \subset \mathcal{H}_B$ and $\{|\varepsilon_\alpha\rangle\}_{\alpha=1}^{d_S}\subset \mathcal{H}_S$ be the energy eigenstates of $H_B$ and $H_S$ respectively. Thus, 
\begin{equation}
\{|\varepsilon_\alpha\rangle \otimes |E_k\rangle \,:\,   1\leq \alpha\leq d_S,  1\leq k\leq d_B  \}
\end{equation}
is a basis of the Hilbert space of the universe $\mathcal{H}$. In view of condition~\eqref{eq:smallint}, this basis is an approximate eigenbasis of the total Hamiltonian with eigenvalues~$\approx \varepsilon_\alpha+ E_k$. Therefore, in terms of this basis the equiprobable state $\mathcal{E}_R$ reads
\begin{equation}
\mathcal{E}_R = \frac{P_R}{d_R} \approx \frac{1}{d_R} \sum_{\alpha,k} \chi_{[E,E+\delta]} (\varepsilon_\alpha+ E_k) \,
|\varepsilon_\alpha\rangle \langle \varepsilon_\alpha| \otimes |E_k\rangle \langle E_k|.
\end{equation}
Notice that the  sum is restricted to indexes $k$ and $\alpha$ such that $\varepsilon_\alpha+ E_k \in [E, E+\delta]$, where the characteristic function $\chi$ does not vanish. 

By tracing over the bath one gets
\begin{equation}
\Omega_S = \tr_B \mathcal{E}_R = \frac{1}{d_R} \sum_{\alpha,k} \chi_{[E,E+\delta]} (\varepsilon_\alpha+ E_k) \,
 |\varepsilon_\alpha\rangle \langle \varepsilon_\alpha|
= \frac{1}{d_R} \sum_{\alpha} d_\alpha^{(B)} |\varepsilon_\alpha\rangle \langle \varepsilon_\alpha|,
\end{equation}
where 
\begin{equation}
d_\alpha^{(B)} = \sum_{k} \chi_{[E,E+\delta]} (\varepsilon_\alpha+ E_k) = \sum_{k} \chi_{[E - \varepsilon_\alpha, E- \varepsilon_\alpha +\delta]} (E_k).
\end{equation}
Since $H_B = \sum_k E_k |E_k\rangle \langle E_k|$,
we get that
\begin{equation}
d_\alpha^{(B)} = \tr \chi_{[E - \varepsilon_\alpha, E- \varepsilon_\alpha +\delta]} (H_B) = \dim \mathcal{H}_{[E - \varepsilon_\alpha, E- \varepsilon_\alpha +\delta]}^{(B)}, 
\end{equation}
where $\mathcal{H}_{[E_1, E_2]}^{(B)}\subset \mathcal{H}_B$ is the spectral subspace of $H_B$ on $[E_1,E_2]$, that is the subspace generated by all eigenvectors with energy in $[E_1,E_2]$. Thus $d_{\alpha}^{(B)}$ is a nonnegative integer, and it may vanish!

Let us define the bath entropy at energy $E$ as~\cite{landau}
\begin{equation}
S_B(E) = \ln \dim \mathcal{H}_{[E, E+\delta]}^{(B)},
\end{equation}
that is the logarithm of the number of bath energy levels in the energy shell $[E, E+\delta]$: This is Boltzmann's statistical entropy.
We get that
\begin{equation}
d_\alpha^{(B)} =  \dim \mathcal{H}_{[E - \varepsilon_\alpha, E- \varepsilon_\alpha +\delta]}^{(B)} = \e^{S_B (E- \varepsilon_\alpha )}.
\end{equation}

Since the dimension of $\mathcal{H}_B$ is very large, $d_R\gg1$, we can assume that the spectrum of $H_B$ is quasi-continuous, so that $S_B(E)$ can be considered a continuous differentiable function of $E$. Thus, by assuming that the microscopic energy $\varepsilon_\alpha$ is much smaller than the macroscopic energy $E$, i.e.\ $\varepsilon_\alpha \ll E$,  we can write
\begin{equation}
S_B(E- \varepsilon_\alpha ) \approx S_B(E) -\frac{\d S_B (E)}{\d E} \, \varepsilon_\alpha.
\end{equation}
Therefore, we get
\begin{equation}
\label{canoneq}
\Omega_S= \frac{1}{d_R} \sum_{\alpha} d_\alpha^{(B)} |\varepsilon_\alpha\rangle \langle \varepsilon_\alpha|
\approx \frac{1}{Z}  \sum_{\alpha}  \e^{-\beta \varepsilon_\alpha} |\varepsilon_\alpha\rangle \langle \varepsilon_\alpha| =
\frac{1}{Z}  \e^{-\beta H_S},
\end{equation}
where $Z= \tr \e^{-\beta H_S}$, and 
\begin{equation}
\beta = \frac{\d S_B (E)}{\d E}.
\end{equation}
is the thermodynamical expression of the inverse temperature of the bath.

\section{Lecture 2: Entanglement and the foundations of Statistical Mechanics}

In this lecture we are going to show how the postulate of equal a priori probability can be proved, rather than postulated, from the very structure of quantum mechanics, and in particular from entanglement.

We will start by recalling some basic facts about entanglement, which according to Schr\"odinger is \textit{the characteristic trait of quantum mechanics}~\cite{schr}.

Let us consider a composite system of two spins on the Hilbert space $\mathcal{H}=\mathcal{H}_S\otimes\mathcal{H}_B$, where $\mathcal{H}_S=\mathcal{H}_B= \mathbb{C}^2.$ As before, the subscript $S$ will stand for \emph{system} while $B$ will stand for  \emph{bath} or \emph{environment}. The system together with its environment will be called \emph{universe}. Moreover we denote by $\{|\up\rangle,|\down\rangle\}$ the computational basis of $\mathbb{C}^2$~\cite{nielsen}.

If the bipartite system is described in terms of a factorized state, for example,
\begin{equation}
|\phi\rangle=|\up\rangle_S\otimes|\up\rangle_B
\end{equation}
then the system and the enviroment can be described independently. On the other hand, when the global state is not factorized, for example the Bell state:
\begin{equation}
|\Phi^+\rangle=\frac{1}{\sqrt{2}}\left(|\up\rangle_S\otimes|\up\rangle_B+|\down\rangle_S\otimes|\down\rangle_B\right),
\end{equation}
then the state is entangled. In fact, in this case the system is described by the density matrix
\begin{equation}
\label{eq:mixed}
\rho_S=\tr_B (|\Phi^+\rangle\langle\Phi^+|)= \frac{1}{2}\,\mathbb{I}_2,
\end{equation}
where $\mathbb{I}_2$ is the identity operator on $\mathcal{H}_S=\mathbb{C}^2$.
Equation~(\ref{eq:mixed}) tells us that the system $S$ is in a totally mixed state, that is, it is randomly distributed.

More generally, we can consider the family of states in $\mathcal{H}$
\begin{equation}
|\Phi_\alpha\rangle = \sqrt{\alpha} |\up\rangle_S\otimes|\up\rangle_B+\sqrt{1-\alpha}|\down\rangle_S\otimes|\down\rangle_B, \qquad \alpha \in[0,1],
\end{equation}
which embeds both the separable state, $|\phi\rangle = |\Phi_{0}\rangle$ for $\alpha=0$, and the Bell state $|\Phi^+\rangle = |\Phi_{1/2}\rangle$ for $\alpha=1/2$. The reduced density matrix of the system reads
\begin{equation}
\rho_{S}^\alpha = \alpha |\up\rangle\langle\up|+ (1-\alpha)|\down\rangle\langle\down|,
\end{equation}
which is a state whose mixture depends on $\alpha$.

Summing up, if the state of the universe is factorized the information on the whole state and on every subsystem is completely accessible. On the contrary, if the global state is entangled, notwithstanding one has a complete knowledge of the state of the universe, a priori only a partial  knowledge on the subsystem can be obtained. When the global state is maximally entangled ($\alpha=1/2$) one has no information at all on the subsystem.

Mathematically speaking the information content of a state $\rho$ is described by the von Neumann entropy~\cite{vn}:
\begin{equation}
S(\rho)=-\tr(\rho\ln\rho)=-\sum_k p_k \ln p_k
\end{equation}
where $p_k$ are the eigenvalues of $\rho$, $0\le p_k \le 1$ and $\sum_k p_k=1$.
 
It is easy to see that every pure state has 0 entropy, which means that the information encoded in the state is completely available. In our case, for example, $S(|\Phi_\alpha\rangle\langle\Phi_\alpha|)=0$ for all $\alpha$. 
On the other hand, one gets
\begin{equation}
S(\rho_{S}^\alpha)=-\alpha \log\alpha -(1-\alpha) \log(1-\alpha),
\label{eq:Shan}
\end{equation}
that is a positive symmetric function for $\alpha\in[0,1]$, which is $0$ for separable global states ($\alpha=0$ and $\alpha=1$) and reaches its maximum $S=\ln 2= \ln d_S$ for the maximally entangled Bell state ($\alpha=1/2$). In this latter case the entropy is maximal and it corresponds to a complete ignorance on the subsystem $S$. Notice that~\eqref{eq:Shan} is nothing but the Shannon entropy~\cite{shannon} of the probability vector $(\alpha, 1-\alpha)$.

In general, given a pure state $|\Psi\rangle$ of a composite system $\mathcal{H}=\mathcal{H}_S\otimes\mathcal{H}_B$ with generic dimensions $d_S = \dim \mathcal{H}_S \leq d_B=\dim \mathcal{H}_B$, one gets that
\begin{equation}
0\leq S(\rho_S) \leq \ln d_S.
\end{equation}
Here, $S(\rho_S) =0$ for separable states, $|\Psi\rangle= |u\rangle\otimes|v\rangle$, while $S(\rho_S) =\ln d_S$ for maximally entangled states, \begin{equation}
|\Psi\rangle= \frac{1}{\sqrt{d_S}} \sum_{k=1}^{d_S }|u_k\rangle\otimes|v_k\rangle,
\end{equation} 
with $\{u_k\}$ and $\{v_k\}$ being orthonormal systems.

The von Neumann entropy is a measure of entanglement which leads to an objective lack of knowledge. In fact, even if we had complete information on the state of the universe (i.e. it is in a pure state and has zero entropy), the state of any subsystem could be mixed and have nonzero entropy, and, as such, it would behave like a probability distribution over pure states.
This is manifestly a purely quantum phenomenon, since no counterpart exists in classical mechanics. Classically, in fact, the complete knowledge of the state of the universe implies a complete knowledge of the state of any subsystem.

In the following we will show that  almost all pure states of a composite system with $d_B\gg d_S$ are highly entangled, and thus the system $S$ is typically in a highly mixed state.  More precisely, we will prove ``canonical typicality'', which mantains that the system will be thermalized (that is, in the canonical state) for almost all pure states of the universe.  Therefore, the postulate of equal a priori probability, which refers to ensembles or time averages of  states of the universe, and as such relies on a subjective lack of information, can be dismissed and one can refer only to pure states of the universe. The lack of information which will give a canonical density matrix for the system is just a physical consequence of entanglement between the system and its environment.

\subsection{Canonical typicality}
In this section we would like to show that the principle of equal a priori probability, which cannot be proved, should be replaced by the principle of \emph{Canonical Typicality}, which is based on individual states rather than ensembles or time averages and, most importantly, can be proved. This principle was named this way by~\cite{goldstein}, and is also known under the name \emph{Quantum Typicality}~\cite{gemmer} or \emph{General Canonical Principle}~\cite{popescun}. 

In this new approach thermalization emerges as a consequence of entanglement between a system and its environment. 
This idea goes back to Schr\"odinger (see the Appendix in~\cite{schr1}) and to von Neumann in his formulation of the quantum ergodic theorem~\cite{vn1}. Then it reappeared several times up to today~\cite{gemmer,goldstein,popescu,popescun}. 

In our deduction of  canonical typicality we are going to follow~\cite{popescu}.

Suppose the universe has to obey some global constraint, say \textit{R}, which translates into the choice of a subspace of the total Hilbert space, say
\begin{equation}
\mathcal{H}_R\subset\mathcal{H}_S\otimes\mathcal{H}_B.
\end{equation}
As before we are going to denote the dimensions of $\mathcal{H}_S$,$\mathcal{H}_E$ and $\mathcal{H}_R$, respectively, by $d_S$, $d_E$ and $d_R$. In the standard approach to statistical mechanics, as seen in the previous section, the restriction $R$ is imposed on the total energy of the universe. In this case, however, we let the restriction be completely arbitrary.

Moreover, the \emph{equiprobable state} in $\mathcal{H}_R$ is denoted by
\begin{equation}
\mathcal{E}_R= \frac{P_{R}}{d_R},
\label{eq:equiprobableR}
\end{equation}
where $P_{R}$ is the projection on $\mathcal{H}_R$. In this case equal probabilities (and random phases) are assigned to all the states of the universe which are consistent with the constraint $R$. When the latter is imposed on the total energy of the universe, $\mathcal{E}_R$ is nothing but the microcanonical state considered in the previous lecture.

The \emph{(generalized) canonical state} of system $S$ is defined as the trace over the bath of $\mathcal{E}_R$, that is:
\begin{equation}
\Omega_S=\tr_B \mathcal{E}_R.
\label{eq:OmegaS}
\end{equation}

Instead of considering the universe in the equiprobable state $\mathcal{E}_R$, which describes subjective ignorance, we will consider it in a pure state $|\phi\rangle$ in $\mathcal{H}_R$, such that $\langle\phi|\phi\rangle=1$. In such a case  the system is described by the density matrix 
\begin{equation}
\label{eq:rhoS}
\rho_S=\tr_B( |\phi\rangle\langle\phi|).
\end{equation} 
The question is to understand how much different is $\rho_S$ from the canonical state $\Omega_S$.
The answer is provided by a theorem given in~\cite{popescu}, which states that $\rho_S$ is almost equal to $\Omega_S$ for almost every pure state compatible with the constraint $R$. 

From this theorem  canonical typicality follows:
\vspace{0.3cm}
\\
\emph{Given a sufficiently small subsystem of the universe, a typical pure state of the universe is such that the subsystem is approximately in the canonical state $\Omega_S$.}
\vspace{0.3cm}

This means that for almost every state $|\phi\rangle \in \mathcal{H}_R$ of the universe, the system behaves as if the universe were in the equiprobable state $\mathcal{E}_R$. Thus, the state of the universe is locally (on the system $S$) practically indistinguishable from $\mathcal{E}_R$.

Moreover, it is important to stress that $\Omega_S$ is not necessarily the thermal canonical state~(\ref{canoneq}), but rather a (generalized) canonical state with respect to the arbitrary restriction $R$ chosen.
Of course, if $R$ is a restriction on the total energy as in~\eqref{eq:spectralsub}
and under the conditions on the total Hamiltonian $H$ considered in the previous lecture (Sec.~\ref{sec:canonicalensemble}) almost every pure state $|\phi\rangle$ of the universe is such that the system $S$ is approximately in the canonical thermal state $\e^{-\beta H_S}/Z$, as in equation~(\ref{canoneq}).

Thus there is a link between  canonical typicality and the standard approach to statistical mechanics.
Yet the core of  canonical typicality does not lie in the explicit expression of $\Omega_S$, which is a standard problem in statistical mechanics and depends on the structure of a given Hamiltonian $H$, but only in the equality 
\begin{equation}
\rho_S\approx\Omega_S,
\label{eq:cantypqual}
\end{equation}
which is of a  purely \emph{kinematic} nature.
It may happen, for example, that for a strongly long-range interacting system the interaction Hamiltonian in~\eqref{eqhamilt} is not negligible, so that the canonical state cannot have the expression~(\ref{canoneq}), and the very concept of temperature is questionable,  but Eq.~\eqref{eq:cantypqual} still holds.

\subsection{Quantitative arguments}

In order to be more quantitative it is essential to explain what the vague expressions like  ``sufficienly small subsystem'',  
``approximately in the canonical state'', and ``a typical pure state'' mean. In particular, we need to define a \emph{distance} between states $\rho_S$ and $\Omega_S$ and a \emph{measure} over the pure states $|\phi\rangle$ with respect to which typicality is defined.

\subsubsection{Distance}
As a distance between $\rho_S$ and the canonical state $\Omega_S$ we will use the trace distance, $\|\rho_S-\Omega_S\|_1$, which is induced by the trace norm
\begin{equation}
\label{eq:tracenorm}
\|\rho\|_1=\tr |\rho| = \tr \sqrt{\rho^\dagger \rho}.
\end{equation}
This distance represents (two times) the maximal difference in the probability of obtaining any outcome for any measurement performed on the two states $\rho_S$ and $\Omega_S$. Indeed, since by duality
\begin{equation}
\|\rho\|_1 = \sup_{\|M\|=1} |\tr(\rho M)|,
\label{eq:duality}
\end{equation}
we get that the difference of the expectation values of an observable $M$ in the two states satisfies the inequality
\begin{equation}
|\tr(\rho_S M) -\tr(\Omega_S M) |
\leq \|\rho_S-\Omega_S\|_1\|M\|.
\end{equation}
Thus, the trace distance quantifies how hard is to tell $\rho_S$ and $\Omega_S$ apart by means of quantum measurements $M$.

A distance easier to handle is that induced by the Hilbert-Schmidt norm
\begin{equation}
\label{eq:HSnorm}
\|\rho\|_2=\sqrt{ \tr (\rho^\dagger \rho) },
\end{equation}
where the square root is taken after the trace. It is easy to prove that $\|\rho\|_2 \leq \|\rho\|_1 \leq  \sqrt{d} \|\rho\|_2$ with $d$ being the dimension of the Hilbert space. However, the Hilbert-Schmidt distance has not a nice operational meaning like the trace distance, and in fact in higher dimension can be very small even if the two states have disjoint supports.
\begin{example}
Consider in $\mathbb{C}^{2d}$ the two states $\rho_1 = P_1/d$ and $\rho_2 = (1-P_1)/d$, where $P_1$ is a rank-$d$ projection. Notice that they have disjoint supports. By a straightforward computation  one gets that
\begin{equation}
\|\rho_1-\rho_2\|_1 = 2, \qquad \|\rho_1-\rho_2\|_2 = \sqrt{\frac{2}{d}},
\end{equation}
so that the norm distance between $\rho_1$ and $\rho_2$ is constant and maximal, while  the Hilbert-Schmidt distance becomes arbitrarily small as
$d$ increases. 
\end{example}

\subsubsection{The uniform measure on  pure states}

Let $|\phi\rangle$ be a pure state in $\mathcal{H}_R$. Due to the normalization condition, $\langle\phi|\phi\rangle=1$,  $|\phi\rangle$~belongs to the unit sphere of $\mathcal{H}_R$.
Indeed, let $\{|u_k\rangle\}_{k=1}^{d_R}$  be an orthonormal basis of $\mathcal{H}_R$. The unit vector $|\phi\rangle$ admits a unique decomposition
\begin{equation}
|\phi\rangle=\sum_{k=1}^{d_R}z_k\,|u_k\rangle, 
\end{equation}
in terms of its Fourier coefficients $z_k= \langle u_k|\phi\rangle\in \mathbb{C}$, for $k =1,\dots,  d_R$.

Consider now the normalization constraint $\langle\phi|\phi\rangle=1$, and the decomposition of $z_k$ into its real and imaginary parts, $z_k=x_k+ i y_k$, so that:
\begin{equation}
\langle\phi|\phi\rangle=\sum_{k=1}^{d_R} |z_k|^2=\sum_{k=1}^{d_R} x_k^2+\sum_{k=1}^{d_R} y_k^2=1.
\end{equation}
The latter equation tells us that $|\phi\rangle$ belongs to a $(2d_R-1)$-dimensional (real) sphere $\mathbb{S}^{2 d_R-1}\subset\mathcal{H}_R$. 

Let us consider the uniform probability measure on the sphere, say $\sigma(\mathbb{S}^{2 d_R-1})=1$. The measure $\sigma$ is rotationally invariant, that is unitarily invariant in $\mathcal{H}_R$, and the expectation value of a function on the sphere is given by:
\begin{equation}
\big\langle f(|\phi\rangle)\big\rangle=\int_{\mathbb{S}^{2 d_R-1}}f(|\phi\rangle)\d\sigma.
\end{equation}
First of all we note that $\big\langle |\phi\rangle \big\rangle=0$. In fact the state $|\phi\rangle$ is uniformly distributed on the sphere and for this reason $\langle z_k\rangle=0$, for every $k$. Moreover:
\begin{equation}
1=\big\langle \|\phi\|^2 \big\rangle=\big\langle \sum_{k=1}^{d_R} |z_k|^2 \big\rangle=\sum_{k=1}^{d_R} \big\langle |z_k|^2 \big\rangle.
\end{equation}
Due to rotationally invariance it follows that $\big\langle |z_k|^2 \big\rangle$ is independent of $k$, and thus $\big\langle |z_k|^2 \big\rangle=1/d_R$. 

\subsubsection{Average vs Typical}

If we compute the average $\big\langle |\phi\rangle\langle\phi| \big\rangle$, we get
\begin{equation}
\big\langle |\phi\rangle\langle\phi| \big\rangle= \big\langle \sum_{k,l=1}^{d_R}z_k \overline{z}_l|u_k\rangle\langle u_l| \big\rangle=\sum_{k,l=1}^{d_R}\big\langle z_k \overline{z}_l
\big\rangle|u_k\rangle\langle u_l|=\frac{1}{d_R}\sum_{k=1}^{d_R}|u_k\rangle\langle u_k|=\frac{P_R}{d_R},
\end{equation}
where we used the fact that $\big\langle z_k \overline{z}_l\big\rangle= \delta_{k,l}/d_R $. Indeed,
\begin{equation}
\big\langle z_k \overline{z}_l\big\rangle= \big\langle x_k  x_l\big \rangle+
\big \langle y_k  y_l\big \rangle
-i\big\langle x_k y_l\big \rangle +i \big\langle x_l y_k\big\rangle ,
\end{equation}
and $\big\langle x_k x_l\big\rangle=\big\langle y_k y_l\big\rangle=\delta_{k,l}/(2 d_R)$, while $\big\langle x_k y_l\big\rangle=\big\langle x_l y_k\big\rangle=0$.
Therefore, we get that the the equiprobable state is nothing but the average state of the universe in $\mathcal{H}_R$:
\begin{equation}
\mathcal{E}_R=\big\langle |\phi\rangle\langle\phi| \big\rangle.
\label{eq:ER=avg}
\end{equation}
By taking the partial trace over the bath of both sides of~(\ref{eq:ER=avg}) we immediately get
\begin{equation}
\Omega_S=\big\langle\rho_S\big\rangle,
\label{eq:rhosavg}
\end{equation}
where we used the fact that $\big\langle \tr_B |\phi\rangle\langle\phi|\big\rangle =\tr_B  \big\langle |\phi\rangle\langle\phi|\big\rangle$ and definitions~(\ref{eq:OmegaS}) and~(\ref{eq:rhoS}).

Equation~\eqref{eq:rhosavg} tells us that the average state of the system is the canonical state $\Omega_S$. In other words, \emph{on average} the system reduced state  of a pure state of the universe $|\phi\rangle$ (constrained to $\H_R$) is the canonical state $\Omega_S$, that is the system reduced state of the equiprobable state of the universe $\mathcal{E}_R$: on average one cannot distinguish locally whether the universe is in a  pure state or in the maximally mixed state.

However, this is not enough: the \emph{average} behavior may give a very loose information on the behavior of single individuals, and even on the \emph{typical} behavior, that is the behavior of a large multitude (see Schr\"odinger's quote at the beginning of the Introduction, Sec.~\ref{sec:intro}).
In fact,  it may happen that a large part of the available states could be far apart from the average. 

As a simple example consider a macroscopic system made up of spins which can assume only the values $\pm1$. Furthermore, suppose that half of them are $+1$ and the other half are $-1$, so that the average spin equals 0. In this situation the average by itself has no physical content, inasmuch as there is not even a single actual spin with the average feature!

What really matters for a typical behavior are also the fluctuations around the average and the possibility to control them; in fact, when the fluctuations (that is the variance) are very small, then the average becomes a physically relevant parameter, since the large majority exhibits a behavior which is very close to the average one.

Therefore we are going to look now at the fluctuations around the average and to prove that
\begin{equation}
\big\langle\|\rho_S - \Omega_S\|_1\big\rangle \leq  \sqrt{ \frac{d_S^2}{d_R} },
\label{eq:fluctuations}
\end{equation}
so that under the sole condition $d_R \gg d_S^2$ the fluctuations around the average  $\Omega_S=\big\langle\rho_S\big\rangle$ are negligible, and canonical typicality~\eqref{eq:cantypqual} holds \cite{gemmer,goldstein}. 

In fact, by using L\'evy's lemma, a profound result in convex geometry, Popescu, Short and Winter~\cite{popescu,popescun} have proved that inequality~\eqref{eq:fluctuations} implies that~\eqref{eq:rhosavg} is true for the overwhelming majority of pure states $|\phi\rangle$, and does not hold on a set of pure states $|\phi\rangle$ exponentially small in $d_R$. This is the content of the following theorem that we are going to discuss.

\begin{theorem}[Canonical Typicality \cite{popescu}]
\label{thm:canonicaltypicality}
For a randomly chosen state $|\phi\rangle\in\mathcal{H}_R\subset\mathcal{H}_S\otimes\mathcal{H}_B$ and arbitrary $\varepsilon > 0$ the distance between the reduced density matrix $\rho_S={\tr_B}\left(|\phi\rangle\langle\phi|\right)$ and the canonical state $\Omega_S={\tr_B}\mathcal{E}_R$ is given probabilistically by:
\begin{equation}
\label{eqtyp}
\mathrm{Prob}\big(\|\rho_S-\Omega_S\|_1\ge\eta\big)\le\eta',
\end{equation}
where
\begin{equation}
\eta=\varepsilon+\sqrt{\frac{d_S}{d_B^{\mathrm{eff}}}},\qquad\eta'=2\exp\left(-Cd_R\varepsilon^2\right),
\end{equation}
with
\begin{equation}
\label{const}
C=\frac{1}{18\pi^3},\qquad d_B^\mathrm{eff}=\frac{1}{\tr{\Omega_B^2}},\qquad\Omega_B=\tr_S\mathcal{E}_R,
\end{equation}
and $d_S=\dim\mathcal{H}_S$, $d_R=\dim\mathcal{H}_R$.
Moreover, it results that 
\begin{equation}
d_B^{\mathrm{eff}}\ge \frac{d_R}{d_S} .
\label{eq:effmin}
\end{equation}
\end{theorem}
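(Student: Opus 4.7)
The strategy is to combine an estimate on the average trace distance $\langle\|\rho_S-\Omega_S\|_1\rangle$ over the uniform measure on $\mathbb{S}^{2d_R-1}\subset\mathcal{H}_R$ with L\'evy's lemma --- the concentration-of-measure result to be proved in the third lecture --- which guarantees that a Lipschitz function on a high-dimensional sphere clusters sharply around its mean. The decomposition $\eta = \varepsilon + \sqrt{d_S/d_B^{\mathrm{eff}}}$ then has a transparent meaning: the square-root piece bounds the average, while $\varepsilon$ controls the admissible fluctuation around it.

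To bound the average I would first invoke the operator inequality $\|X\|_1\leq \sqrt{d_S}\,\|X\|_2$ (a consequence of Cauchy--Schwarz applied to the singular values of $X$, which has rank at most $d_S$) together with Jensen's inequality for $\sqrt{\,\cdot\,}$:
\begin{equation}
\big\langle\|\rho_S-\Omega_S\|_1\big\rangle \leq \sqrt{d_S\, \big\langle\|\rho_S-\Omega_S\|_2^2\big\rangle}.
\end{equation}
Since $\langle\rho_S\rangle=\Omega_S$ by~\eqref{eq:rhosavg}, the right-hand average collapses to $\langle\tr\rho_S^2\rangle - \tr\Omega_S^2$. I would then evaluate $\langle\tr\rho_S^2\rangle$ via the swap trick $\tr\rho_S^2 = \tr[(\rho_S\otimes\rho_S)\,\mathbb{F}]$, with $\mathbb{F}$ the swap on $\mathcal{H}_S\otimes\mathcal{H}_S$. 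The second moment $\big\langle|\phi\rangle\langle\phi|^{\otimes 2}\big\rangle$ on the sphere of $\mathcal{H}_R$ is fixed by unitary invariance; it must equal a symmetric combination of $P_R\otimes P_R$ and the swap on $\mathcal{H}_R\otimes\mathcal{H}_R$ acting on $P_R\otimes P_R$, with weight $1/[d_R(d_R+1)]$. Contracting with $\mathbb{F}$ on the system factor and taking the partial trace over the bath produces, after simplification, $\langle\|\rho_S-\Omega_S\|_2^2\rangle \leq 1/d_B^{\mathrm{eff}}$, hence $\langle\|\rho_S-\Omega_S\|_1\rangle\leq \sqrt{d_S/d_B^{\mathrm{eff}}}$.

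To invoke L\'evy's lemma for $f:|\phi\rangle\mapsto \|\tr_B(|\phi\rangle\langle\phi|)-\Omega_S\|_1$ I would verify Lipschitz continuity: the triangle inequality for $\|\cdot\|_1$, the contractivity of the partial trace, and the pure-state estimate $\||\phi\rangle\langle\phi|-|\psi\rangle\langle\psi|\|_1 \leq 2\,\||\phi\rangle-|\psi\rangle\|$ together yield a Lipschitz constant of order one (specifically $L\leq 2$). L\'evy's lemma in the form $\mathrm{Prob}(|f-\langle f\rangle|\geq\varepsilon)\leq 2\exp(-C d_R \varepsilon^2)$, with constant $C=1/(18\pi^3)$ as in~\eqref{const}, combined with the inclusion $\{f\geq \varepsilon+\langle f\rangle\}\subset\{|f-\langle f\rangle|\geq \varepsilon\}$ and the mean bound of the previous paragraph, then delivers~\eqref{eqtyp}.

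Finally, for $d_B^{\mathrm{eff}}\geq d_R/d_S$ I would argue directly. Since $\Omega_B = (\tr_S P_R)/d_R$ and $P_R\leq \mathbb{I}_S\otimes\mathbb{I}_B$, the positive operator $\tr_S P_R$ on $\mathcal{H}_B$ satisfies $\tr_S P_R\leq d_S\,\mathbb{I}_B$; its eigenvalues $\lambda_k$ therefore lie in $[0,d_S]$ and obey $\sum_k\lambda_k=\tr P_R=d_R$. Hence
\begin{equation}
\tr\Omega_B^2 = \frac{1}{d_R^2}\sum_k\lambda_k^2\leq \frac{d_S}{d_R^2}\sum_k\lambda_k=\frac{d_S}{d_R},
\end{equation}
and inversion yields~\eqref{eq:effmin}. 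The principal obstacle in this program is the careful evaluation of the fourth-moment integral on the sphere: the swap-trick bookkeeping must be tracked precisely so that $\tr\Omega_B^2$ emerges with the right normalization, producing the sharp $1/d_B^{\mathrm{eff}}$ factor rather than the cruder $d_S/d_R$ estimate. Once that computation is in hand, the Lipschitz bound and the application of L\'evy's lemma are essentially routine packaging.
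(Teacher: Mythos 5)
Your proposal is correct and follows essentially the same route as the paper: the bound $\|\cdot\|_1\le\sqrt{d_S}\,\|\cdot\|_2$ plus Jensen's inequality for the mean, the Lipschitz constant $\le 2$ via contractivity of the partial trace and the pure-state overlap formula, L\'evy's lemma on $\mathbb{S}^{2d_R-1}$, and an operator-norm bound on $\Omega_B$ for~\eqref{eq:effmin}. The only substantive difference is that you propose to derive the key variance estimate $\big\langle \tr\rho_S^2\big\rangle-\tr\Omega_S^2\le\tr\Omega_B^2$ explicitly by the swap-trick second-moment computation on the sphere, a step the paper simply cites from~\cite{popescu}.
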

We observe that when $\eta$ and $\eta'$ are small enough the state $\rho_S$ will be sufficiently close to the canonical state $\Omega_S$, with high probability. For small $\varepsilon$  this happens as long as the effective dimension of the environment, $d_B^\mathrm{eff}$, is much larger than the dimension of the system $d_S$, and the dimension of the accessible space $d_R$ is much larger than $\varepsilon^{-2}$.  

Notice that from~(\ref{eq:effmin}) one gets that
\begin{equation}
\eta \leq \varepsilon + \sqrt{\frac{d_S^2}{d_R}}=\tilde\eta,
\end{equation}
so that 
\begin{equation}
\mathrm{Prob}\big(\,\|\rho_S-\Omega_S\|_1\ge\tilde\eta\,\big) \leq \mathrm{Prob}\big(\,\|\rho_S-\Omega_S\|_1\ge\eta\,\big)\le\eta',
\end{equation}
that is
\begin{equation}
\mathrm{Prob}\Bigg(\,\|\rho_S-\Omega_S\|_1\ge \varepsilon + \sqrt{\frac{d_S^2}{d_R}}\,\Bigg) 
\leq 2\exp\left(-Cd_R\varepsilon^2\right).
\end{equation}
For example, if the total accessible space is large ($d_R\gg d_S^2$), and one chooses $\varepsilon=d_R^{-1/3}$, then $\rho_S\approx\Omega_S$ for the overwhelming majority of pure states $|\phi\rangle$ of the universe. Indeed,
\begin{equation}
\mathrm{Prob}\left(\|\rho_S-\Omega_S\|_1\ge d_R^{-1/3} + d_S d_R^{-1/2}\right) 
\leq 2\exp\left(-Cd_R^{1/3}\right),
\end{equation}
and $\rho_S \to \Omega_S$ in probability as $d_R\to\infty$.

\subsection{Proof of Theorem~\ref{thm:canonicaltypicality}}
\label{sec:Levy}

A crucial ingredient in the proof of Theorem~\ref{thm:canonicaltypicality} is L\'evy's Lemma, which we briefly recall. Roughly speaking L\'evy's lemma states that the value of any regular function on a high dimensional sphere is almost everywhere equal to its  average value. More precisely:
\begin{lemma}[L\'evy]
\label{lem:Levy}
Let $f:\mathbb{S}^n\to\mathbb{R}$ be a continuous function on the $n$-dimensional sphere $\mathbb{S}^n$ with Lipschitz constant $\eta$. Let $\phi$ be a point on the sphere chosen uniformly at random, then for all $\varepsilon>0$:
\begin{equation}
\mathrm{Prob}\left(|f(\phi)-\big\langle f(\phi)\big\rangle|\ge\varepsilon\right)\le 2 \exp\left(-\frac{2 C (n+1)}{\eta^2}\varepsilon^2\right)
\label{eq:Levy}
\end{equation}
where C is given in equation \eqref{const}.
\end{lemma}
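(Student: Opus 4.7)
The plan is to deduce the concentration inequality from the spherical isoperimetric inequality, which is the geometric heart of the phenomenon. First I would fix a median $M_f$ of $f$ on $\mathbb{S}^n$, i.e., a real number such that both $A^+ = \{x \in \mathbb{S}^n : f(x) \geq M_f\}$ and $A^- = \{x \in \mathbb{S}^n : f(x) \leq M_f\}$ have measure at least $1/2$. The Lipschitz condition forces every point in the (geodesic) $r$-neighborhood $A^\pm_r$ of $A^\pm$ to satisfy $f(y) \geq M_f - \eta r$ (respectively $\leq M_f + \eta r$). Hence, setting $r = \varepsilon/\eta$, the event $\{|f(\phi) - M_f| \geq \varepsilon\}$ is contained in the union of the complements of $A^\pm_r$, and bounding its probability reduces to estimating the measure of the complement of the $r$-enlargement of a set of measure at least $1/2$.

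The key step is then L\'evy's spherical isoperimetric inequality: among all subsets of $\mathbb{S}^n$ with prescribed measure, the spherical cap minimizes the measure of the complement of its $r$-neighborhood. Applied to sets of measure at least $1/2$, the extremal configuration is a hemisphere, and by computing the uniform measure of the complement of an enlarged hemisphere one obtains a Gaussian-type bound of the form $\exp(-c(n+1)r^2)$. The precise constant $C = 1/(18\pi^3)$ in \eqref{const} emerges from this explicit cap calculation (parametrized by the colatitude angle and controlled via elementary estimates on the Beta integral giving the cap volume); this is the main technical input I would set up carefully but not grind through in detail. Combining the two steps yields
\begin{equation*}
\mathrm{Prob}\big(|f(\phi) - M_f| \geq \varepsilon\big) \leq 2 \exp\!\left(-\frac{2C(n+1)}{\eta^2}\varepsilon^2\right).
\end{equation*}

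The final step is to pass from the median $M_f$ to the expectation $\langle f(\phi)\rangle$. The concentration around the median just proved, together with the identity $|M_f - \langle f\rangle| \leq \int_0^\infty \mathrm{Prob}(|f(\phi) - M_f| \geq t)\,dt$, gives a shift of order $\eta/\sqrt{n+1}$ between the two centers. This shift is of the same size as (or smaller than) the fluctuation scale controlled by the right-hand side of \eqref{eq:Levy}, so it can be absorbed by adjusting $\varepsilon$ at the cost of (at most) a harmless numerical factor that is already accommodated by the choice of $C$.

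The main obstacle is the spherical isoperimetric inequality itself \emph{with the explicit constant}. Its qualitative form is classical (two-point symmetrization of Baernstein or rearrangement arguments), but extracting the quantitative Gaussian estimate with the precise constant $1/(18\pi^3)$ requires careful bookkeeping in the cap-measure integrals; an alternative route avoiding isoperimetry is the log-Sobolev / semigroup approach on $\mathbb{S}^n$, which also produces a sharp sub-Gaussian tail but with constants of the same flavor. In the expository spirit of these notes I would cite the isoperimetric theorem, highlight the elementary cap-volume computation, and devote explicit attention to the Lipschitz reduction and the median-to-mean passage, which are the conceptually transparent parts of the argument.
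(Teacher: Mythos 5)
Your proposal follows essentially the same route as the paper: fix a median, use the Lipschitz bound to trap $\{|f-M_f|\ge\varepsilon\}$ in the complements of the $\varepsilon/\eta$-enlargements of the super/sub-level sets, invoke the spherical isoperimetric inequality to reduce to a hemisphere, bound the residual cap measure by a Gaussian tail, and finally pass from median to mean by absorbing the (smaller-order) shift into the constant. If anything, your treatment of the median-to-mean step via $|M_f-\langle f\rangle|\le\int_0^\infty \mathrm{Prob}(|f-M_f|\ge t)\,\d t$ is more explicit than the paper's, which, like you, does not grind out the precise constant $C=1/(18\pi^3)$.
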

This means that the set of exceptional points, where the value of the function differs appreciably (i.e. more than $\varepsilon$) from its average value,  is exponentially small.
Recall that the Lipschitz constant of $f$ is the minimum $c>0$ such that 
\begin{equation}
|f(\phi_1) - f(\phi_2)|\leq c |\phi_1-\phi_2|
\end{equation} 
for all $\phi_1,\phi_2\in \mathbb{S}^n$. In particular if $f$ is differentiable with bounded derivative, then $\eta= \max_{\phi} |f'(\phi)|$.

Let us now apply L\'evy's lemma in order to prove equation \eqref{eqtyp}. 
Define 
\begin{equation}
\label{eq:fdef}
f(|\phi\rangle)= \|\rho_S-\Omega_S\|_1,
\end{equation} 
with $\rho_S = \tr_B |\phi\rangle\langle\phi|$. 
Preliminarily we are going to prove that

\begin{lemma}
Let $\eta$ be the Lipschitz constant of the function $f$ defined in~\eqref{eq:fdef}. One gets
$\eta\le2$.
\end {lemma}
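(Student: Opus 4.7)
The plan is to chain together three simple estimates: the reverse triangle inequality for the trace norm, the contractivity of the partial trace, and a sharp bound relating the trace distance between two pure projectors to the Euclidean distance between the underlying unit vectors.

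First, let $|\phi_1\rangle,|\phi_2\rangle\in\mathbb{S}^{2d_R-1}$ and set $\rho_S^{(i)}=\tr_B|\phi_i\rangle\langle\phi_i|$. The reverse triangle inequality for $\|\cdot\|_1$ gives
\begin{equation}
|f(|\phi_1\rangle)-f(|\phi_2\rangle)| = \bigl|\,\|\rho_S^{(1)}-\Omega_S\|_1-\|\rho_S^{(2)}-\Omega_S\|_1\,\bigr|\le\|\rho_S^{(1)}-\rho_S^{(2)}\|_1 .
\end{equation}
Next, since $\tr_B$ is a trace-preserving, completely positive map it is a contraction for the trace norm, so
\begin{equation}
\|\rho_S^{(1)}-\rho_S^{(2)}\|_1 \le \|\,|\phi_1\rangle\langle\phi_1|-|\phi_2\rangle\langle\phi_2|\,\|_1 .
\end{equation}

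The main step is to bound the right-hand side by $2\,\|\,|\phi_1\rangle-|\phi_2\rangle\,\|$. For two rank-one projectors onto unit vectors one has the well-known identity
\begin{equation}
\|\,|\phi_1\rangle\langle\phi_1|-|\phi_2\rangle\langle\phi_2|\,\|_1 = 2\sqrt{1-|\langle\phi_1|\phi_2\rangle|^2},
\end{equation}
obtained by diagonalising the traceless Hermitian operator $|\phi_1\rangle\langle\phi_1|-|\phi_2\rangle\langle\phi_2|$ on the two-dimensional subspace it spans. Writing $\langle\phi_1|\phi_2\rangle=\alpha+\mathrm{i}\beta$ and noting that $\|\,|\phi_1\rangle-|\phi_2\rangle\,\|^{2}=2-2\alpha$, the desired inequality reduces to $1-\alpha^2-\beta^2\le 2-2\alpha$, i.e. $-\beta^2\le(1-\alpha)^2$, which is manifestly true. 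Hence
\begin{equation}
\|\,|\phi_1\rangle\langle\phi_1|-|\phi_2\rangle\langle\phi_2|\,\|_1 \le 2\,\|\,|\phi_1\rangle-|\phi_2\rangle\,\|.
\end{equation}

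Combining the three inequalities yields $|f(|\phi_1\rangle)-f(|\phi_2\rangle)|\le 2\,\|\,|\phi_1\rangle-|\phi_2\rangle\,\|$, proving $\eta\le 2$. The only genuinely non-routine step is the projector/vector distance inequality; everything else is either the reverse triangle inequality or the standard fact that partial trace is trace-norm contractive, both of which may be used without further comment.
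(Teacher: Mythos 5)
Your proof is correct and follows essentially the same route as the paper's: reverse triangle inequality, trace-norm contractivity of the partial trace, and the identity $\|\,|\phi_1\rangle\langle\phi_1|-|\phi_2\rangle\langle\phi_2|\,\|_1=2\sqrt{1-|\langle\phi_1|\phi_2\rangle|^2}$ followed by comparison with $\|\,|\phi_1\rangle-|\phi_2\rangle\,\|$. Your explicit real/imaginary-part verification of the final inequality is in fact cleaner than the paper's one-line justification, and citing contractivity of the partial trace (which the paper derives via duality) is acceptable as a standard fact.
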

\begin{proof}
Fix two pure states, 
say $|\phi_1\rangle$ and $|\phi_2\rangle$, and the respective reduced density matrices $\rho_1=\tr_B\left(|\phi_1\rangle\langle\phi_1|\right)$ and $\rho_2=\tr_B\left(|\phi_2\rangle\langle\phi_2|\right)$.
Consider now:
\begin{eqnarray}
|f(|\phi_1\rangle)-f(|\phi_2\rangle)|^2&=&|\,\|\rho_1-\Omega_S\|_1-\|\rho_2-\Omega_S\|_1|^2\\
&\le&\|\rho_1-\rho_2\|_1^2\\
&=& \| \tr_B \left(|\phi_1\rangle\langle\phi_1|-|\phi_2\rangle\langle\phi_2|\right)\|_1^2\\
&\le& \|\,|\phi_1\rangle\langle\phi_1|-|\phi_2\rangle\langle\phi_2|\,\|_1^2.
\end{eqnarray}
The last inequality holds since partial tracing reduces trace norm. Indeed, from~(\ref{eq:duality}) one gets
\begin{equation}
\|\rho\|_1 = \sup_{\|A\|=1} |\tr(A\rho)| \geq \sup_{\|C\|=1} \big|\tr\big((C\otimes \mathbb{I}_R)\rho\big)\big| 
= \sup_{\|C\|=1} |\tr(C \tr_B \rho)| = \|\tr_B\rho\|_1,
\end{equation}
where the inequality follows since the supremum is taken on the smaller set of operators of the form $A= C\otimes \mathbb{I}_R$.

Furthermore we claim that
\begin{equation}
\|\,|\phi_1\rangle\langle\phi_1|-|\phi_2\rangle\langle\phi_2|\,\|_1^2= 4 \left(1-|\langle\phi_1|\phi_2\rangle|^2\right)\le4\|\,|\phi_1\rangle-|\phi_2\rangle\|^2,
\label{eq:claim2}
\end{equation} 
where the last inequality follows from $\Re \langle\phi_1|\phi_2\rangle \leq \langle\phi_1|\phi_2\rangle$.
Therefore, 
\begin{equation}
|f(|\phi_1\rangle)-f(|\phi_2\rangle)|\leq  2 \|\,|\phi_1\rangle-|\phi_2\rangle\|,
\end{equation}
for all  $|\phi_1\rangle$ and $|\phi_2\rangle$, whence $\eta\le2$.
\end{proof}

Let us now return to our main purpose and apply L\'evy's lemma to the function $f$ in~(\ref{eq:fdef}) with $\eta\le2$ and $n=2d_R-1$:
\begin{equation}
\mathrm{Prob}\left(\,\big|\|\rho_S-\Omega_S\|_1-\big\langle \|\rho_S-\Omega_S\|_1 \big\rangle \big| \ge\varepsilon\,\right)\le 2\exp\left(-\frac{4 C d_R}{\eta^2}\varepsilon^2\,\right)
\le 2\exp\left(-C d_R\varepsilon^2\right)
\end{equation} 
Moreover, the following inequality holds:
\begin{eqnarray}
& &\mathrm{Prob}\left(\,\|\rho_S-\Omega_S\|_1\ge\varepsilon+\big\langle\|\rho_S-\Omega_S\|_1\big\rangle\,\right)
\nonumber\\
& &\qquad\qquad\le\mathrm{Prob}\left(\,\big|\|\rho_S-\Omega_S\|_1-\big\langle \|\rho_S-\Omega_S\|_1 \big\rangle \big| \ge\varepsilon\,\right),
\end{eqnarray}
since  the probability on the left hand side is taken on a subset of  the probability on the right hand side.

We claim  that 
\begin{equation}
\big\langle\|\rho_S-\Omega_S\|_1\big\rangle\le\sqrt{\frac{d_S}{d_B^{\mathrm{eff}}}},
\label{eq:claim1}
\end{equation} 
where $d_B^\mathrm{eff}$ is defined in equation \eqref{const}. Thus,
\begin{equation}
\mathrm{Prob}\left(\,\|\rho_S-\Omega_S\|_1\ge\varepsilon+\sqrt{\frac{d_S}{d_B^{\mathrm{eff}}}}\,\right)
\le\mathrm{Prob}\left(\,\|\rho_S-\Omega_S\|_1\ge\varepsilon+\big\langle\|\rho_S-\Omega_S\|_1\big\rangle\,\right).
\end{equation}

So that, putting all the ingredients together and defining:
\begin{equation}
\eta=\varepsilon+\sqrt{\frac{d_S}{d_B^{\,\mathrm{eff}}}}\qquad\eta'=2\exp\left(-\,C\,d_R\varepsilon^2\right)
\end{equation}
equation \eqref{eqtyp} holds.

\subsubsection{Proof of claim~\eqref{eq:claim2}}
Let $|\phi_1\rangle$ and $|\phi_2\rangle$ be states in $\mathcal{H}_R$ and define the operator $A=|\phi_1\rangle\langle\phi_1|-|\phi_2\rangle\langle\phi_2|$. This operator acts non trivially only on $\mathcal{K}$, the linear span of $\phi_1$ and $\phi_2$, and is zero outside. Our aim is to compute $\|A\|_1=\tr|A|=\tr\sqrt{A^\dagger A}$.

Decompose $\phi_2$ along $\phi_1$ and its orthogonal complement in $\mathcal{K}$, say, $\phi_2=\alpha\phi_1+\beta\phi_1^\perp$, with $\alpha = \langle \phi_1|\phi_2\rangle$ and $|\alpha|^2+|\beta|^2=1$.
After a straightforward computation one gets
\begin{eqnarray}
A^\dagger A &=&|\phi_1\rangle\langle\phi_1|-|\alpha|^2|\phi_1\rangle\langle\phi_1|+|\beta|^2|\phi_1^\perp\rangle\langle\phi_1^\perp|
\nonumber\\
&=& |\beta|^2 |\phi_1\rangle\langle\phi_1|+|\beta|^2|\phi_1^\perp\rangle\langle\phi_1^\perp| = |\beta|^2 \mathbb{I}_{\mathcal{K}}.
\end{eqnarray}
Thus $|A|= |\beta| \mathbb{I}_{\mathcal{K}}$ and, since $\dim\mathcal{K}= 2$, one gets 
\begin{equation}
\tr|A|=2|\beta| = \sqrt{1-|\langle\phi_1|\phi_2\rangle|^2}.
\end{equation}

\subsubsection{Proof of claim~\eqref{eq:claim1}}
The trace norm can be bounded above by the Hilbert-Schmidt norm:
\begin{equation}
\|\rho\|_1 = \tr\sqrt{\rho^\dagger \rho} = d_S \tr \left(\mathcal{E}_S \sqrt{\rho^\dagger \rho} \right)
\leq d_S \sqrt{\tr \left(\mathcal{E}_S {\rho^\dagger \rho} \right)} = \sqrt{d_S \tr \left(\rho^\dagger \rho \right)} =
\sqrt{d_S} \|\rho\|_2,
\label{eq:trN-HS}
\end{equation}
where $\mathcal{E}_S= \mathbb{I}_S/d_S$ is the equiprobable (microcanonical) state of $\mathcal{H}_S$, and the inequality follows from the concavity of the square root function.
We get that
\begin{equation}
\big\langle\|\rho_S - \Omega_S\|_2\big\rangle = \big\langle \sqrt{\tr(\rho_S - \Omega_S)^2}\big\rangle
\leq  \sqrt{\big\langle \tr(\rho_S - \Omega_S)^2 \big\rangle} =
\sqrt{ \tr \big\langle(\rho_S - \Omega_S)^2 \big\rangle}
\end{equation}
Now, 
\begin{equation}
\big\langle(\rho_S - \Omega_S)^2 \big\rangle = \big\langle\rho_S^2 \big\rangle - \Omega_S^2,
\end{equation}
since $\Omega_S = \big\langle\rho_S \big\rangle$ is the average reduced state.
Therefore,
\begin{equation}
\big\langle\|\rho_S - \Omega_S\|_2\big\rangle \leq  \sqrt{ \tr (\big\langle\rho_S^2 \big\rangle- \Omega_S^2) } .
\label{eq:71}
\end{equation}
The  standard deviation on the right hand side can be bounded above by~\cite{popescu}
\begin{equation}
\tr (\big\langle\rho_S^2 \big\rangle- \Omega_S^2) = \tr \big\langle\rho_S^2 \big\rangle - \tr \big\langle\rho_S \big\rangle^2 \leq \tr \big\langle\rho_B \big\rangle^2 ,
\end{equation}
where $\rho_B = \tr_S |\phi\rangle\langle\phi|$ is the reduced density matrix of the bath.
Notice that, from~\eqref{eq:ER=avg}, its average 
\begin{equation}
\big\langle\rho_B \big\rangle = \tr_S \big\langle|\phi\rangle\langle\phi| \big\rangle = \tr_S \mathcal{E}_R = \Omega_B,
\label{eq:rhoBavg}
\end{equation}
is nothing but the bath reduced state of the equiprobable state $\mathcal{E}_R$, in complete symmetry with the relation~\eqref{eq:rhosavg} for the system.

Finally, by gathering up~\eqref{eq:trN-HS}, \eqref{eq:71}-\eqref{eq:rhoBavg}, we get
\begin{equation}
\big\langle\|\rho_S - \Omega_S\|_1\big\rangle \leq  \sqrt{ d_S \tr \Omega_B^2 },
\end{equation}
which, by using definition~\eqref{const}, $d_B^\mathrm{eff}=1/\tr{\Omega_B^2}$, yields  claim~\eqref{eq:claim1}. 

\subsubsection{Proof of inequality~\eqref{eq:effmin}}
It finally remains to prove inequality~\eqref{eq:effmin}. One gets
\begin{equation}
\tr\Omega_B^2 \leq \|\Omega_B\| \tr \Omega_B = \|\Omega_B\|
\end{equation}
However,
\begin{eqnarray}
\|\Omega_B\| &=& \sup_{\|\psi_B\| =1} \langle \psi_B | \Omega_B | \psi_B\rangle = \sup_{\|\psi_B\| =1} \langle \psi_B | \tr_S \mathcal{E}_R | \psi_B\rangle
\nonumber\\
&=& \sup_{\|\psi_B\| =1} \sum_{k=1}^{d_S}\langle u_k\otimes \psi_B |  \mathcal{E}_R | u_k \otimes \psi_B\rangle,
\end{eqnarray}
where $\{u_k\}$ is a basis of $\mathcal{H}_S$. From~\eqref{eq:equiprobableR} one gets
\begin{equation}
\langle u_k\otimes \psi_B |  \mathcal{E}_R | u_k \otimes \psi_B\rangle = \frac{1}{d_R} \langle u_k\otimes \psi_B | P_R | u_k \otimes \psi_B\rangle \leq \frac{1}{d_R},
\end{equation}
whence
\begin{equation}
\tr\Omega_B^2  \leq \|\Omega_B\| \leq \frac{d_S}{d_R},
\end{equation}
and~\eqref{eq:effmin} follows.

\begin{figure}[tbp]
\centering
\includegraphics[width=0.4\columnwidth]{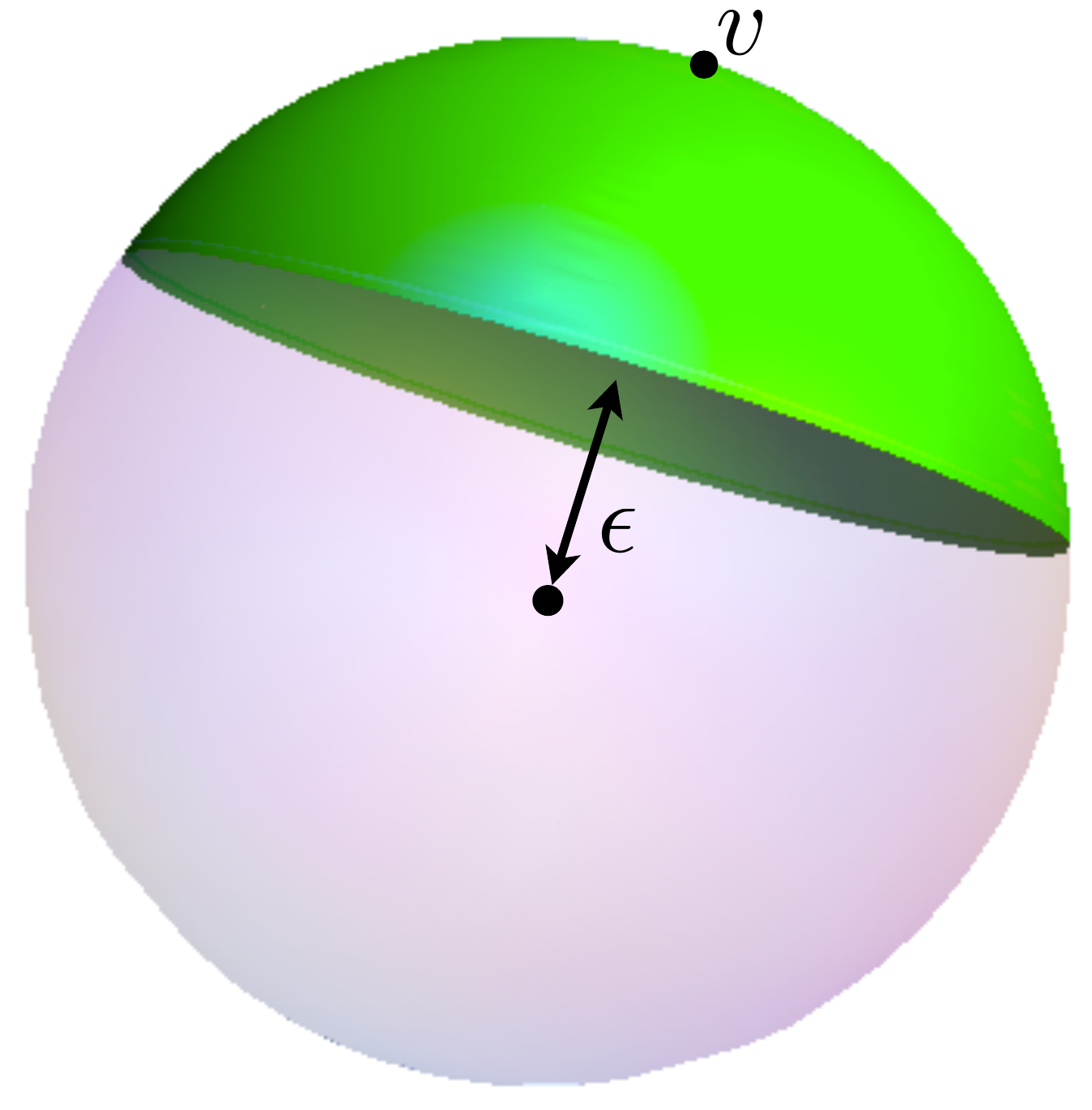}
\caption{Depicted in green an $\varepsilon$-cap around $v$ in $\mathbb{S}^2$. \label{fig:epsiloncap}}
\end{figure}

\section{Lecture 3: L\'evy's Lemma and Convex Geometry}

In this last lecture we are going to present some ideas from convex geometry in high dimensions which have been fruitful in the discussion of canonical typicality. Our goal will be to give a proof of L\'evy's Lemma~\ref{lem:Levy}.
For a deeper  immersion on the subject, see the enjoyable introduction by K.~Ball~\cite{ball}.

\subsection{Concentration of measure in geometry}

The Euclidean unit ball in $\mathbb{R}^n$ will be denoted by $\mathbb{B}^n=\{x\in\mathbb{R}^n\,:\,\sum_{i=1}^n x_i^2\le1\}$, while
its boundary, the unit sphere, by $\mathbb{S}^{n-1}=\partial \mathbb{B}^n=\{x\in\mathbb{R}^n\,:\,\sum_{i=1}^n x_i^2=1\}$.

The measure of $\mathbb{S}^{n-1}$ and of $\mathbb{B}^n$ are related by $|\mathbb{S}^{n-1}|=n\,|\mathbb{B}^n|$, and
one can  explicitly compute
\begin{equation}
|\mathbb{B}^n|=\frac{\pi^{\frac{n}{2}}}{\Gamma\left(\frac{n}{2}+1\right)}.
\end{equation}
By using the Stirling approximation formula on the Euler function $\Gamma$,
\begin{equation}
\Gamma\left(\frac{n}{2}+1\right)\sim\sqrt{2\pi}\e^{-\frac{n}{2}}\left(\frac{n}{2}\right)^\frac{n+1}{2},
\end{equation}
one finds that
\begin{equation}
|\mathbb{B}^n|\sim \left(\frac{2\pi\e}{n}\right)^n,
\end{equation}
as $n\to\infty$. 
This means that the higher the dimension $n$ the smaller is the measure of the Euclidean unit ball. Though it may look highly counterintuitive, it is only one of the strange results one can find in convex geometry. Among those lies  L\'evy's lemma as we are going to discuss.

\begin{figure}[tbp]
\centering
\includegraphics[width=0.6\columnwidth]{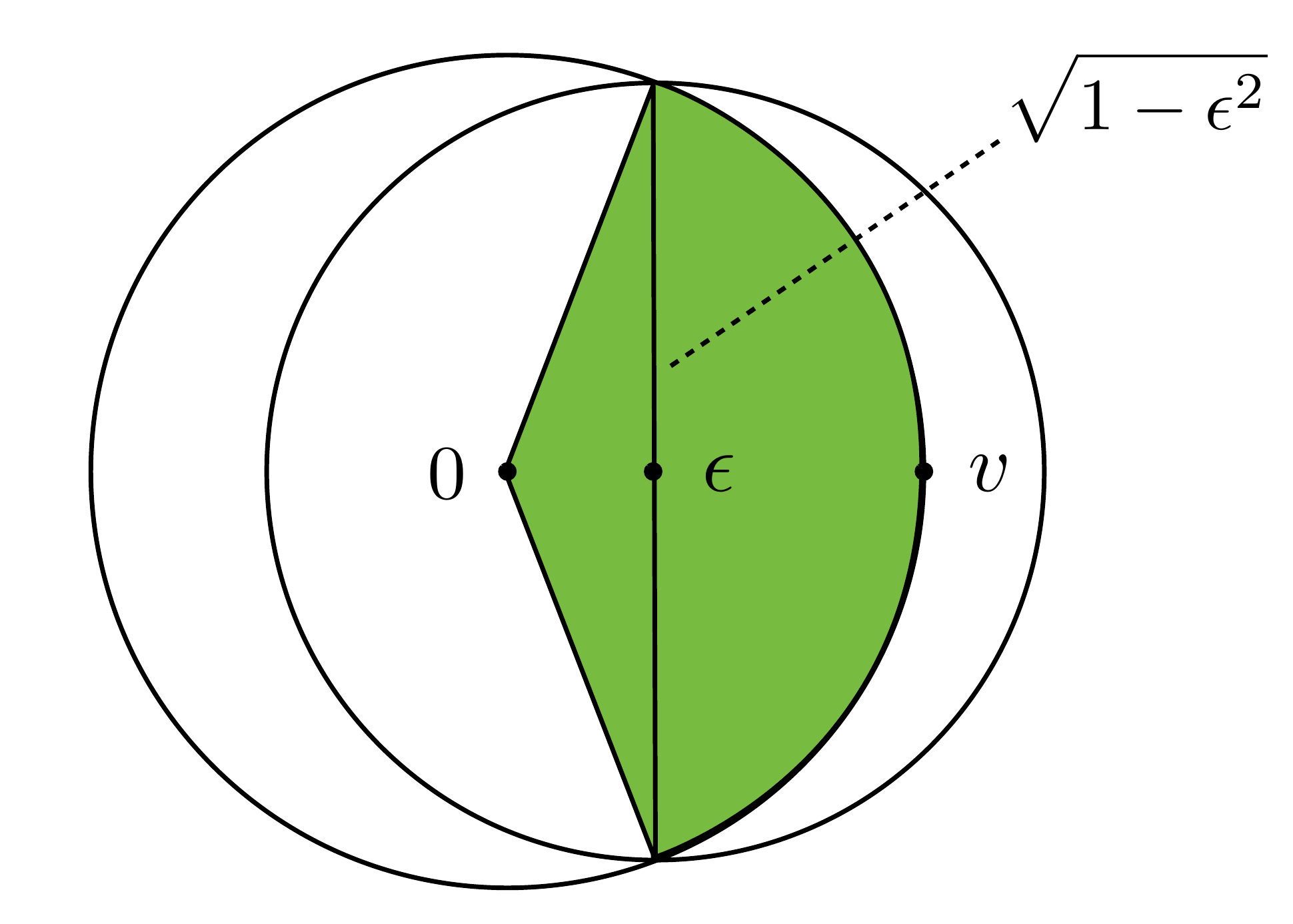}
\caption{Graphical proof of the inequality \eqref{ineq}. $\mathrm{Cone}(\varepsilon,v)$ is depicted in green.\label{fig:ineqfigure}}
\end{figure}

First we need to define what we mean by an $\varepsilon$-cap about a point $v$ on the hypersphere $\mathbb{S}^{n-1}$ (Figure \ref{fig:epsiloncap}). It is the following subset of $\mathbb{S}^{n-1}$:

\begin{equation}
C(\varepsilon,v)=\{\phi\in\mathbb{S}^{n-1}\,:\,\phi\cdot v\ge\varepsilon\},
\end{equation}
where $\cdot$ is the standard scalar product in $\mathbb{R}^n$

Next, define the uniform probability measure on the $n-1$ dimensional sphere as:
\begin{equation}
\sigma_n(A)=\frac{|A|}{|\mathbb{S}^{n-1}|},
\end{equation}
for every measurable set $A\subset \mathbb{S}^{n-1}$.
We are going to prove the following useful lemma~\cite{ball}:
\begin{lemma}
\label{lemma}
\begin{equation}
\sigma_n\big(C(\varepsilon,v)\big)= \frac{|C(\varepsilon,v)|}{|\mathbb{S}^{n-1}|}\le \exp\left(-\frac{n}{2}\varepsilon^2\right)\qquad 0<\varepsilon<1
\end{equation}
\end{lemma}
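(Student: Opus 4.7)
My plan is a volume comparison. First I convert the surface measure of the cap into a Euclidean volume by passing to the solid cone $\mathrm{Cone}(\varepsilon,v)=\{t\phi\,:\,t\in[0,1],\,\phi\in C(\varepsilon,v)\}$ over the cap, as suggested by Fig.~\ref{fig:ineqfigure}. Polar coordinates give
\[
|\mathrm{Cone}(\varepsilon,v)|=\int_0^1 t^{n-1}\,\mathrm{d}t\int_{C(\varepsilon,v)}\mathrm{d}\omega=\frac{|C(\varepsilon,v)|}{n},
\]
and together with the analogous identity $|\mathbb{B}^n|=|\mathbb{S}^{n-1}|/n$ for the full ball this turns the surface ratio $\sigma_n(C(\varepsilon,v))$ into the volume ratio $|\mathrm{Cone}(\varepsilon,v)|/|\mathbb{B}^n|$.

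Second, I enclose the cone inside the Euclidean ball $B=B(\varepsilon v,\sqrt{1-\varepsilon^2})$ centered on the $v$-axis at height $\varepsilon$. The radius is calibrated so that $\partial B$ meets $\mathbb{S}^{n-1}$ exactly along the rim of the cap, since $|\phi-\varepsilon v|^2=1-\varepsilon^2$ whenever $\phi\cdot v=\varepsilon$ and $|\phi|=1$. Containment is verified by a direct quadratic computation: for $x=t\phi$ with $t\in[0,1]$ and $\phi\cdot v\ge\varepsilon$,
\[
|x-\varepsilon v|^2=t^2-2t\varepsilon(\phi\cdot v)+\varepsilon^2,
\]
and a minimization over $\phi\cdot v\ge\varepsilon$ and $t\in[0,1]$ shows the right-hand side stays below $1-\varepsilon^2$. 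Since $|B|=(1-\varepsilon^2)^{n/2}|\mathbb{B}^n|$ by scaling, this yields
\[
\sigma_n(C(\varepsilon,v))=\frac{|\mathrm{Cone}(\varepsilon,v)|}{|\mathbb{B}^n|}\le\frac{|B|}{|\mathbb{B}^n|}=(1-\varepsilon^2)^{n/2},
\]
and $1-x\le \mathrm{e}^{-x}$ at $x=\varepsilon^2$ upgrades the bound to $\mathrm{e}^{-n\varepsilon^2/2}$, as claimed.

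The step carrying the real geometric content is the inclusion $\mathrm{Cone}(\varepsilon,v)\subset B$. Rearranging, the inequality reads $t^2-2t\varepsilon(\phi\cdot v)+2\varepsilon^2-1\le 0$, and after the worst-case choice $\phi\cdot v=\varepsilon$ its roots in $t$ are $t=1$ and $t=2\varepsilon^2-1$. Hence the inclusion holds outright on $t\in[0,1]$ whenever $\varepsilon\le 1/\sqrt{2}$, that is, exactly when the origin itself lies in $B$. For $\varepsilon>1/\sqrt{2}$ one has to carve off the inner sliver $t\in[0,2\varepsilon^2-1]$: either iterate the argument on the smaller sub-cone, or enclose it in the shifted ball $B(v/(2\varepsilon),\,1/(2\varepsilon))$; a short check then gives $(2\varepsilon)^{-n}\le \mathrm{e}^{-n\varepsilon^2/2}$ in that range, so the exponential estimate persists. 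This edge case is where I expect to spend most of the care, while everything else is dimensional bookkeeping and elementary inequalities.
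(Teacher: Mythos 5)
Your proof is correct and follows essentially the same route as the paper: convert the cap measure to the volume ratio $|\mathrm{Cone}(\varepsilon,v)|/|\mathbb{B}^n|$, enclose the cone in the ball of radius $\sqrt{1-\varepsilon^2}$ centred at $\varepsilon v$, and finish with $(1-\varepsilon^2)^{n/2}\le \mathrm{e}^{-n\varepsilon^2/2}$. The one point where you go beyond the paper is the regime $\varepsilon>1/\sqrt{2}$, which the paper's proof silently omits even though the lemma is stated for all $0<\varepsilon<1$; your fix is sound, and can be streamlined by observing that the \emph{entire} cone (not just the inner sliver) lies in the ball $B(v/(2\varepsilon),1/(2\varepsilon))$, because $|t\phi-v/(2\varepsilon)|^2\le 1/(4\varepsilon^2)$ reduces to $t\le(\phi\cdot v)/\varepsilon$, which holds whenever $\phi\cdot v\ge\varepsilon$ and $t\le 1$ --- so no splitting is needed and no factor of $2$ is lost from summing two ball volumes.
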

\begin{proof}
First we recall, by simple geometrical proportionality, that
\begin{equation}
\frac{|C(\varepsilon,v)|}{|\mathbb{S}^{n-1}|}=\frac{|\mathrm{Cone}(\varepsilon,v)|}{|\mathbb{B}^n|}\end{equation}
Next, consider the translated ball as shown in Figure \ref{fig:ineqfigure}. It is evident that as long as $\varepsilon\le 1/\sqrt{2}$ the cone $\mathrm{Cone}(\varepsilon,v)$ is contained into the ball $\mathbb{B}^n(\varepsilon v/\|v\|,\sqrt{1-\varepsilon^2})$  of center $\varepsilon v/\|v\|$ and radius
$\sqrt{1-\varepsilon^2}$, so that
\begin{equation}
\label{ineq}
\frac{|\mathrm{Cone}(\varepsilon,v)|}{|\mathbb{B}^n|}\le\frac{|\mathbb{B}^n(\varepsilon v/\|v\|,\sqrt{1-\varepsilon^2})|}{|\mathbb{B}^n|}.
\end{equation}
The result follows from
\begin{equation}
\frac{|\mathbb{B}^n(\varepsilon v/\|v\|,\sqrt{1-\varepsilon^2})|}{|\mathbb{B}^n|}=\left(1-\varepsilon^2\right)^\frac{n}{2}\le
\exp\left(-\frac{n}{2}\varepsilon^2\right),
\end{equation}
where in the last line we used the elementary inequality: $\ln(1-x)\le-x$.
\end{proof}
\begin{figure}[tbp]
\centering
\includegraphics[width=0.45\columnwidth]{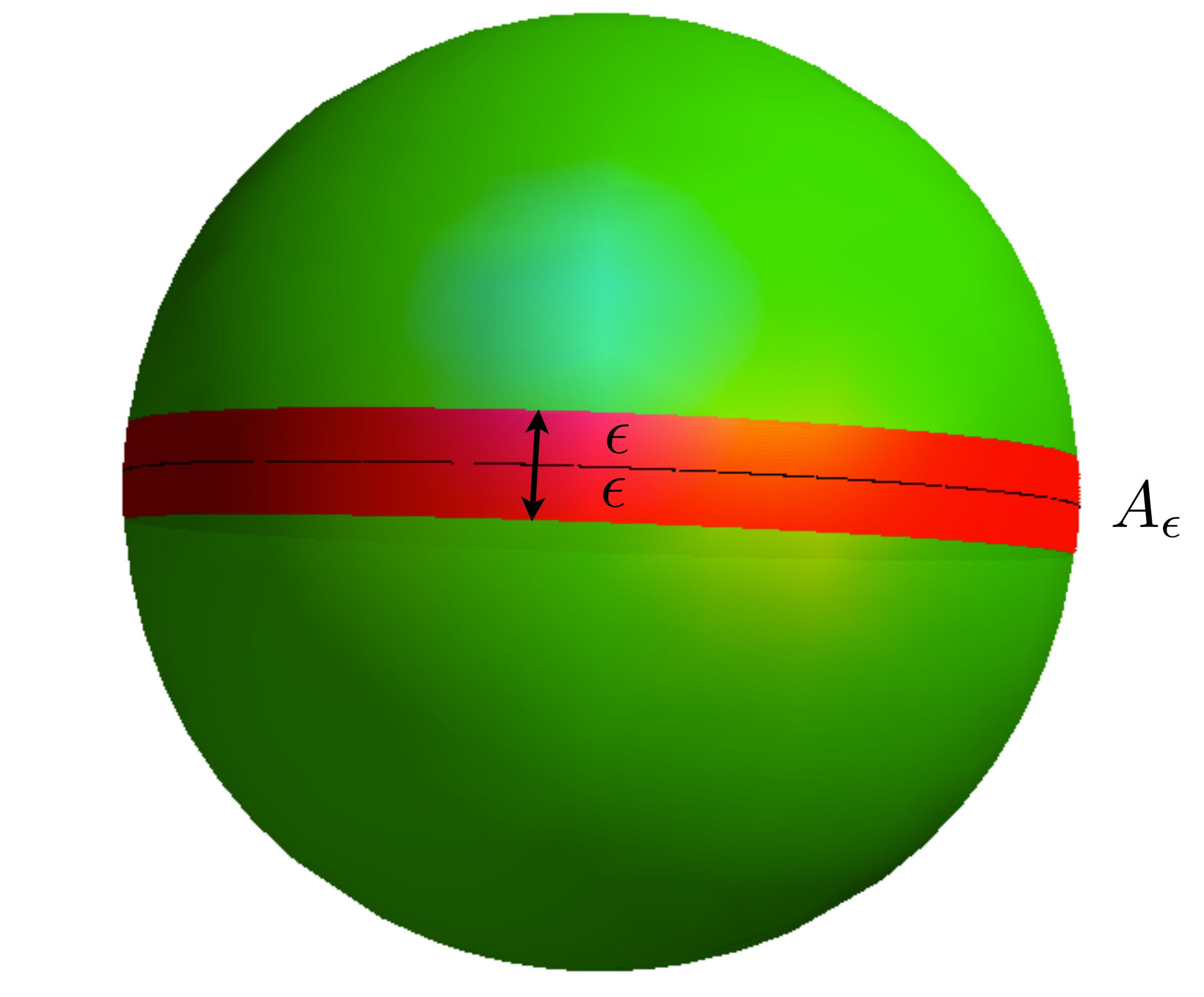}
\caption{A belt, $A_\varepsilon$, around the equator is depicted in red on $\mathbb{S}^2$.\label{belt}}
\end{figure}
Consider a belt around the equator of a sphere, say $A_\varepsilon$ (Figure \ref{belt}), from the previous discussion it follows that
\begin{equation}
\sigma_n(A_\varepsilon)\ge 1-2e^{-\frac{n}{2}\varepsilon^2}.
\end{equation}
From the latter inequality we deduce that the measure is almost concentrated around the equator! This result is quite surprising and goes against our common sense of what happens in low dimensions.

The classical isoperimetric inequality in $\mathbb{R}^n$ states that among all bodies of fixed volume, the Euclidean balls are the ones which have the smallest surface.

Consider a compact set $A$ in $\mathbb{R}^n$. The distance of a point $x$ in $\mathbb{R}^n$ from the set $A$ is 
\begin{equation}
d(x,A)=\inf\{\|x-y\|:y\in A\}, 
\end{equation}where $\|\cdot\|$ is the Euclidean norm. 

Fix $\varepsilon>0$, an $\varepsilon$-neighborhood of the set $A$ is the set 
\begin{equation}
A_\varepsilon=\{x\in \mathbb{R}^n: d(x,A)< \varepsilon \}
\end{equation} 
(Figure \ref{epsilonblow}).
Then, the isoperimetric inequality states that if  the set $A$ and the unit ball $\mathbb{B}^n$ have the same measure, 
\begin{equation}
|A|=|\mathbb{B}^n|,
\end{equation} 
then it follows that \begin{equation}
|A_\varepsilon|\ge|\mathbb{B}^n_\varepsilon|
\end{equation} 
for every $\varepsilon>0$.

This formulation relates the measure and the metric in $\mathbb{R}^n$. In fact if we fatten a set in $\mathbb{R}^n$ into its $\varepsilon$-neighborhood by means of the metric, its measure will increase at least as much as it does for a ball.

So far we have been comparing  neighborhoods of sets by means of two ingredients: the measure and the metrics. The previous discussion can be, then, extended to abstract metric spaces with a  measure.
In particular, we are going to see what happens for the hypersphere $\mathbb{S}^{n-1}$
equipped with, say, the Euclidean distance of $\mathbb{R}^n$ (the geodesic distance will do as well) and the uniform probability measure $\sigma_n$. 

As in $\mathbb{R}^n$, the solutions of the isoperimetric problem on the sphere are the balls in the metric  of $\mathbb{S}^{n-1}$, that is the spherical caps. Hence if $A\subset \mathbb{S}^{n-1}$ is such that  $\sigma_n(A)=\sigma_n(C)$, with $C$ a spherical cap, it follows that
$\sigma_n(A_\varepsilon)\ge\sigma_n(C_\varepsilon)$. 
Though it may seem harmless, the last statement has startling consequences.

\begin{figure}[tbp]
\centering
\includegraphics[width=0.65\columnwidth]{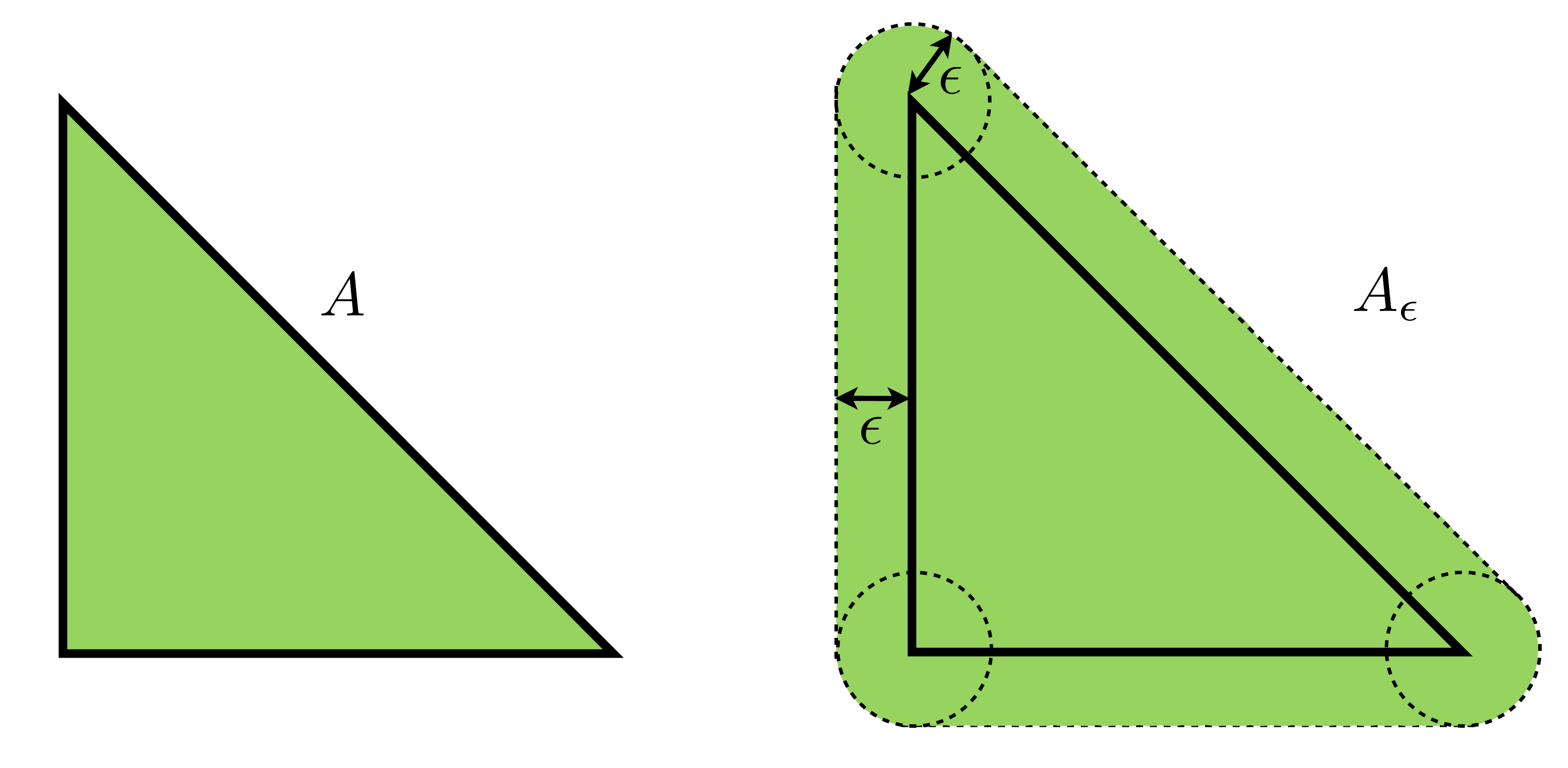}
\caption{An $\varepsilon$-neighborhood of the  triangle $A$.\label{epsilonblow}}
\end{figure}

Fix a set $A$ on $\mathbb{S}^{n-1}$ such that its measure equals the measure of a hemisphere~$H$: 
\begin{equation}
\sigma_n(A)=\sigma_n(H)=\frac{1}{2}.
\end{equation} 
From the isoperimetric inequality it follows that 
\begin{equation}
\sigma_n(A_\varepsilon)\ge\sigma_n(H_\varepsilon), \qquad 0\le\varepsilon\le1. 
\end{equation}
The complement of the fattened hemisphere $H_\varepsilon$ is an $\varepsilon$-cap  $C(\varepsilon,v)$, for some $v\in \mathbb{S}^{n-1}$, that is $C(\varepsilon,v)=\mathbb{S}^{n-1}-H_\varepsilon$. Then, by Lemma \ref{lemma} it follows that: $\sigma_n(C)\le\e^{-\frac{n}{2}\varepsilon^2}$, whence 
\begin{equation}
\sigma_n(A_\varepsilon)\ge \sigma_n(H_\varepsilon)\ge1-\exp\left(-\frac{n}{2}\varepsilon^2\right).
\end{equation} 
This inequality shows that almost the entire sphere lies within a distance $\varepsilon$ of $A$, although there may be some points which are rather far from $A$!
This phenomenon is known as \emph{concentration of measure}: the measure and the metric do not match and the measure $\sigma_n$ of the whole sphere concentrate very close to any set of measure~$1/2$.

Finally, we are going to prove L\'evy's lemma for Lipschitz functions.

\subsection{L\'evy's Lemma}

Suppose $f:\mathbb{S}^{n-1}\to\mathbb{R}$ is a continuous function with Lipschitz constant $\eta=1$, i.e. 
\begin{equation}
|f(\theta)-f(\phi)|\le\|\theta-\phi\|
\end{equation} for every $\theta$ and $\phi$ points on $\mathbb{S}^{n-1}$. 

There is at least one number $m_f\in\mathbb{R}$, called a \emph{median} of $f$, such that both 
\begin{equation}
\sigma_n(A^{-})\ge 1/2 \quad \mathrm{and}\quad  \sigma_n(A^+)\ge 1/2,
\end{equation}
where
\begin{equation}
A^{-}=\{\phi\in\mathbb{S}^{n-1}\,:\,f(\phi)\le m_f\}, \qquad
A^{+}=\{\phi\in\mathbb{S}^{n-1}\,:\,f(\phi)\ge m_f\}.
\end{equation} 
Consider a point $\theta\in A^{-}_{\varepsilon}$, the $\varepsilon$-neighborhood of $A^{-}$, that is  $d(\theta,A^{-})\le\varepsilon$. It follows that 
\begin{equation}
|f(\theta)-m_f|\le\|\theta-\phi_m\|\le \varepsilon, 
\end{equation}where $f(\phi_m)=m_f$. Thus $f(\theta)\le m_f+\varepsilon$ as long as $d(\theta,A^{-})\le\varepsilon$, that is
\begin{equation}
A^{-}_{\varepsilon}\subset \{f(\phi)\le m_f+\varepsilon\} . 
\end{equation}

We claim that only a tiny fraction of the points on the sphere has this property. Indeed,
\begin{equation}
\label{eqlev}
\mathrm{Prob}(f>m_f+\varepsilon)=\sigma_n (\{f>m_f+\varepsilon\})\le 1-\sigma_n(A^{-}_{\varepsilon})\le \exp\left(-\frac{n \varepsilon^2}{2}\right),
\end{equation}
since $\sigma_n(A^{-}_\varepsilon) \geq \sigma_n(H_\varepsilon)$.

Similarly, by considering $A^{+}$, one gets that
\begin{equation}
A^{+}_{\varepsilon}\subset \{f(\phi)\ge m_f-\varepsilon\} ,
\end{equation}
whence
\begin{equation}
\label{eq:lev2}
\sigma_n (\{f<m_f-\varepsilon\})\le \exp\left(-\frac{n \varepsilon^2}{2}\right).
\end{equation} 
By putting together~\eqref{eqlev} and~(\ref{eq:lev2}), L\'evy's lemma follows:
\begin{equation}
\mathrm{Prob}(|f-m_f|>\varepsilon)\le 2 \exp\left(-\frac{n \varepsilon^2}{2}\right).
\label{eq:Levymed1}
\end{equation}
Therefore, the function $f$ is nearly equal to the constant $m_f$ on almost the entire sphere, even if its variation between two antipodal points could be as large as 2.

This result is valid for 1-Lipschitz functions and gives a bound to the deviations of $f$ from its median $m_f$. In order to get the inequality~(\ref{eq:Levy}), one has to consider arbitrary Lipschitz constants $\eta$ and consider the average $\big\langle f\big\rangle$ instead of the median $m_f$. 

As for the first point, notice that if $g$ has Lipschitz constant $\eta$, then $f=\eta^{-1} g$ has Lispchitz constant 1 and $m_f = m_g/\eta$, thus
\begin{equation}
\mathrm{Prob}(|g-m_g|>\varepsilon) = \mathrm{Prob}(|f-m_f|>\varepsilon/\eta )\le 2 \exp\left(-\frac{n \varepsilon^2}{2 \eta^2}\right).
\label{eq:Levymed2}
\end{equation}

As for the second point, notice that if the function is very close to its median for almost all points, its average is also very close to the median  $\big\langle f\big\rangle \approx m_f$, except for an exceptional set of exponentially small measure. Thus one obtains an inequality of the same form as~\eqref{eq:Levymed2}, with a different constant in the exponent, namely~\eqref{eq:Levy}.

\section*{Acknowledgments}

We would like to thank the  organizers, A.P.~Balachandran, Beppe Marmo, and Sachin Vaidya for their kindness in inviting us and for the effort they exerted on the organization of the workshop. 
This work was partially supported by INFN through the project ``QUANTUM'' and by the Italian National Group of Mathematical Physics (GNFM-INdAM).


\begin{thebibliography}{99}
\bibitem{schr1}E. Schr\"odinger, \textit{Statistical Thermodynamics}, 1989
\bibitem{gemmer} J. Gemmer, M. Michel and G. Mahler, \textit{Quantum Thermodynamics}, (Springer-Verlag, Berlin, 2009).
\bibitem{goldstein} S. Goldstein, J.L. Lebowitz, R. Tumulka and N. Zangh\`i, \textit{Phys. Rev. Lett.}, 96, 050403 (2006) 
\bibitem{popescu} S. Popescu, A.J. Short and A. Winter, arXiv:quant-ph/0511225, 2005
\bibitem{popescun} S. Popescu, A.J. Short and A. Winter, \textit{Nature Physics}, 2, 754-758 (2006).
\bibitem{uffink} J. Uffink, \textit{Handbook for Philsophy of Physics} 924-1074 (Elsevier, Amsterdam , 2007).
\bibitem{Boltz}  L. Boltzmann, Studien \"uber das Gleichgewicht der lebendigen Kraft zwischen bewegten materiellen Punkten, \textit{Wiener Berichte}, 58, 517-560
\bibitem{gibbs} J.W. Gibbs, \textit{Elementary Principles in Statistical Mechanics}, (Scribner, New York, 1902)
\bibitem{oliveira} C.R. de Oliveira, T. Werlang, \textit{Rev. Bras. Ensino. Fis.}, 29, 189-201, 2007.
\bibitem{prentis} J.J. Prentis, \textit{Am. J. Phys}, 68, 1073, 2000.
\bibitem{gallavotti} G. Gallavotti, \textit{Statistical Mechanics. A short treatise} (Springer-Verlag, Berlin, 1999).
\bibitem{Huang} K. Huang, \textit{Statistical Mechanics}, (Wiley \& Sons, 1987)
\bibitem{landau} L.D. Landau, E.M. Lifshitz, \textit{Statistical Physics}, (Pergamon, London, 1958)
\bibitem{schr}E. Schr\"odinger, \textit{Proc. Cambridge Philos. Soc.}, 31, 555-563, 1935.
\bibitem{nielsen} M.A. Nielsen, I.L. Chuang, \textit{Quantum Computation and Quantum Information} (Cambridge University Press, Cambridge, 2000).
\bibitem{vn} J. von Neumann, \textit{Mathematical Foundation of Quantum Mechanics} (Princeton University Press, Princeton, 1955)
\bibitem{shannon} C.E. Shannon,  \textit{Bell Syst. Tech. J.}, 27, 1948
\bibitem{vn1} J. von Neumann, \textit{European Phys. J. H} 35, 201-237 (2010). German original in \textit{Zeitschrift fuer Physik} 57, 30-70 (1929)
\bibitem{ball} K. Ball, \textit{An Elementary Introduction to Modern Convex Geometry}, in 
\textit{Flavors of Geometry} MSRI Publications, Vol. \textbf{31}, 1-58 (1997).
\end{thebibliography}
\end{document}